\newcommand{\rmnum}[1]{\romannumeral #1}
\newcommand{\Rmnum}[1]{\expandafter\@slowromancap\romannumeral #1@}
\newtheorem{theorem}{\textbf{Theorem}}
\newtheorem{lemma}{\textbf{Lemma}}
\newtheorem{remark}{\textbf{Remark}}
\newtheorem{claim}{Claim}
\newcolumntype{L}{>{\normalsize}l}
\newcolumntype{C}{>{\normalsize}c}
\icmltitlerunning{Layered Sampling for Robust Optimization Problems}
\begin{document}

\twocolumn[
\icmltitle{Layered Sampling for Robust Optimization Problems}




\begin{icmlauthorlist}
\icmlauthor{Hu Ding}{to}
\icmlauthor{Zixiu Wang}{to}
\end{icmlauthorlist}

\icmlaffiliation{to}{School of Computer Science and Technology, University of Science and Technology of China}

\icmlcorrespondingauthor{Hu Ding}{huding@ustc.edu.cn, \url{http://staff.ustc.edu.cn/~huding/} }

\icmlkeywords{Machine Learning, ICML}

\vskip 0.3in
]



\printAffiliationsAndNotice{}  

\begin{abstract}
In real world,  our datasets often contain outliers. 
Moreover, the outliers can seriously affect the final machine learning result. Most existing algorithms for handling outliers take high time complexities ({\em e.g.} quadratic or cubic complexity).
{\em Coreset} is a popular approach for compressing data so as to speed up the optimization algorithms. However, the current coreset methods cannot be easily extended to handle the case with outliers. 
In this paper, we propose a new variant of coreset technique,  {\em layered sampling}, to deal with two fundamental robust optimization problems: {\em $k$-median/means clustering with outliers} and {\em linear regression with outliers}. This new coreset method is in particular suitable to speed up the iterative algorithms (which often improve the solution within a local range) for those robust optimization problems. Moreover, our method is easy to be implemented in practice. We expect that our framework of layered sampling will be applicable to  other robust optimization problems.
 \end{abstract}

\section{Introduction}
\label{sec-intro}

{\em Coreset} is a widely studied technique for solving many optimization problems~\cite{DBLP:journals/corr/Phillips16,bachem2017practical,munteanu2018coresets,DBLP:journals/widm/Feldman20}. The (informal) definition is as follows. Given an optimization problem with the objective function $\Delta$, denote by $\Delta(P, C)$ the objective value determined by a dataset $P$ and a solution $C$; a small set $S$ is called a coreset if 
\begin{eqnarray}
\Delta(P, C)\approx \Delta(S, C) \label{for-gcoreset}
\end{eqnarray}
for any feasible solution $C$. Roughly speaking, the coreset is a small set of data approximately representing a much larger dataset, and therefore existing algorithm can run on the coreset (instead of the original dataset) so as to reduce the complexity measures like running time, space, and communication. In the past years, the coreset techniques have been successfully applied to solve many optimization problems, such as clustering~\cite{chen2009coresets,DBLP:conf/stoc/FeldmanL11,huang2018epsilon}, logistic regression~\cite{huggins2016coresets,munteanu2018coresets}, linear regression~\cite{dasgupta2009sampling,drineas2006sampling}, and Gaussian mixture model~\cite{lucic2017training,DBLP:journals/corr/abs-1906-04845}. 

A large part of existing coreset construction methods are based on the theory of {\em sensitivity} which was proposed by \cite{langberg2010universal}. Informally, each data point $p\in P$ has the sensitivity $\phi(p)$ (in fact, we just need to compute an appropriate upper bound of $\phi(p)$) to measure its importance to the whole instance $P$ over all possible solutions, and $\Phi(P)=\sum_{p\in P} \phi(p)$ is called the total sensitivity. The coreset construction is a simple sampling procedure where each point $p$ is drawn {\em i.i.d.} from $P$ proportional to $\frac{\phi(p)}{\Phi(P)}$; each sampled point $p$ is assigned a weight $w(p)=\frac{\Phi(P)}{m \phi(p)}$ where $m$ is the sample size depending on the ``pseudo-dimension'' of the objective function $\Delta$ (\cite{DBLP:conf/stoc/FeldmanL11,li2001improved}); eventually, the set of weighted sampled points form the desired coreset $S$.

In real world, datasets are noisy and contain outliers. Moreover, outliers could seriously affect the final results in data analysis~\cite{chandola2009anomaly,DBLP:journals/cacm/GoodfellowMP18}. However, the sensitivity based coreset approach is not appropriate to handle robust optimization problems involving outliers ({\em e.g.}, $k$-means clustering with outliers). For example, it is not easy to compute the sensitivity $\phi(p)$ because the point $p$ could be inlier or outlier for different solutions; moreover, it is challenging to build the relation, such as (\ref{for-gcoreset}), between the original instance $P$ and the coreset $S$ ({\em e.g.}, how to determine the number of outliers for the instance $S$?).

\subsection{Our Contributions} 
In this paper, we consider two important robust optimization problems: {\em $k$-median/means clustering with outliers} and {\em linear regression with outliers}. Their quality guaranteed algorithms exist but  often have high complexities that seriously limit their applications in real scenarios (see Section~\ref{sec-relate} for more details). We observe that these problems can be often efficiently solved by some heuristic algorithms in practice, though they only guarantee local optimums in theory. For example, \cite{chawla2013k} proposed the algorithm {\em $k$-means-~-} to solve the problem of $k$-means clustering with outliers, where the main idea is an {\em alternating minimization} strategy. The algorithm is an iterative procedure, where it alternatively updates the outliers and the $k$ cluster centers in each iteration; eventually the solution converges to a local optimum. The alternating minimization strategy is also widely used for solving the problem of linear regression with outliers, {\em e.g.,}~\cite{ShenS19}. A common feature of these methods is that they usually start from an initial solution and then locally improve the solution round by round. Therefore, a natural question is 

{\em can we construct a ``coreset'' only for a local range in the solution space?}

Using such a coreset, we can substantially speed up those iterative algorithms. 
Motivated by this question, we 
introduce a new variant of coreset method called {\em layered sampling}. Given an initial solution $\tilde{C}$, we partition the given data set $P$ into a consecutive sequence of ``layers'' surrounding $\tilde{C}$ and conduct the random sampling in each layer; the union of the samples, together with the points located in the outermost layer, form the coreset $S$. Actually, our method is partly inspired by the coreset construction method of $k$-median/means clustering (without outliers) proposed by \cite{chen2009coresets}. However, we need to develop significantly new idea in theory to prove its correctness for the case with outliers. The purpose of layered sampling is not to guarantee the approximation quality (as (\ref{for-gcoreset})) for any solution $C$, instead, it only guarantees the quality for the solutions in a local range $\mathcal{L}$ in the solution space (the formal definition is given in Section~\ref{sec-pre}). Informally, we need to prove the following result to replace (\ref{for-gcoreset}): 
\begin{eqnarray}
\forall C\in\mathcal{L}, \Delta(P, C)\approx \Delta(S, C) \label{for-gcoreset2}
\end{eqnarray}
See Figure~\ref{fig-local} for an illustration. In other words,  the new method can help us to find a local optimum faster. Our main results are shown in Theorem~\ref{the-main} and \ref{the-main-lr}. The construction algorithms are easy to implement. 

\begin{figure}[]
    \centering
  \includegraphics[height=0.8in]{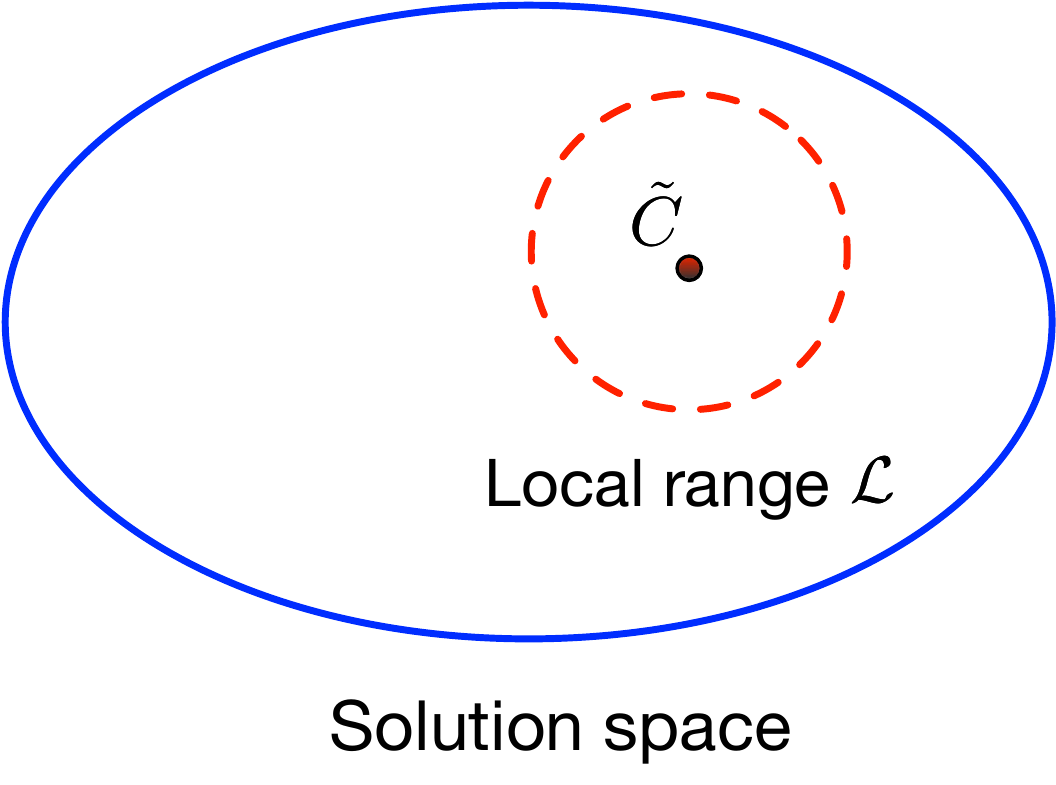}
  \vspace{-0.1in}
      \caption{The red point represents the initial solution $\tilde{C}$, and our goal is to guarantee (\ref{for-gcoreset2}) for a local range around $\tilde{C}$.}
  \label{fig-local}
  \vspace{-0.1in}
\end{figure}

  \vspace{-0.05in}
\subsection{Related Works}
\label{sec-relate}
  \vspace{-0.05in}

\textbf{$k$-median/means clustering (with outliers).}  $k$-median/means clustering are two popular center-based clustering problems~\cite{awasthi2014center}. 
It has been extensively studied for using coreset techniques to reduce the complexities of $k$-median/means clustering algorithms~\cite{chen2009coresets,har2007smaller,fichtenberger2013bico,DBLP:conf/soda/FeldmanSS13}; in particular,  \cite{DBLP:conf/stoc/FeldmanL11} proposed a unified coreset framework for a set of clustering problems. However, the research on using coreset to handle outliers is still quite limited. Recently,  \cite{huang2018epsilon} showed that a uniform independent sample can serve as a coreset for clustering with outliers in Euclidean space; however, such uniform sampling based method often misses some important points and therefore introduces an unavoidable error on the number of outliers. 
\cite{gupta2018approximation} also studied the uniform random sampling idea but under the assumption that  
each optimal cluster should be large enough. Partly inspired by the method of~\cite{mettu2004optimal}, \cite{DBLP:conf/nips/ChenA018} proposed a novel summary construction algorithm to reduce input data size which guarantees an $O(1)$ factor of distortion on the clustering cost.

In theory, the algorithms with provable guarantees for $k$-median/means clustering with outliers ~\cite{chen2008constant,krishnaswamy2018constant,friggstad2018approximation} have high complexities and are difficult to be implemented in practice. The heuristic but practical algorithms  have also been studied before~\cite{chawla2013k,ott2014integrated}. By using the local search method, \cite{gupta2017local} provided a $274$-approximation algorithm of $k$-means clustering with outliers but needing to discard more than the desired number of outliers; to improve the running time, they also used $k$-means++~\cite{arthur2007k} to seed the ``coreset'' that yields an $O(1)$ factor approximation. Based on the idea of $k$-means++, \cite{bhaskara2019greedy} proposed an $O(\log k)$-approximation algorithm.

\textbf{Linear regression (with outliers).} 
Several coreset methods for ordinary linear regression (without outliers) have been proposed~\cite{drineas2006sampling,dasgupta2009sampling,BoutsidisDM13}.
For the case with outliers, which is also called ``Least Trimmed Squares linear estimator (LTS)'', a uniform sampling approach was studied by \cite{DBLP:journals/algorithmica/MountNPSW14,Ding-2014}. But similar to the scenario of clustering with outliers, such uniform sampling approach introduces an unavoidable error on the number of outliers. 

\cite{DBLP:journals/algorithmica/MountNPSW14} also proved that it is impossible to achieve even an approximate solution for LTS within polynomial time under the conjecture of {\em the hardness of affine degeneracy}~\cite{DBLP:journals/dcg/EricksonS95}, if the dimensionality $d$ is not fixed. 
Despite of its high complexity, several practical algorithms were proposed before and most of them are based on the idea of alternating minimization that improves the solution within a local range, such as~\cite{Rousseeuw-1984,RousseeuwD06,Dougla-1994,DBLP:journals/csda/MountNPWS16,DBLP:conf/nips/BhatiaJK15,ShenS19}. \cite{DBLP:conf/colt/KlivansKM18} provided another approach based on the {\em sum-of-squares} method.

  \vspace{-0.06in}
\subsection{Preliminaries}
\label{sec-pre}
  \vspace{-0.06in}

Below, we introduce several important definitions.

\textbf{\rmnum{1}. $k$-Median/Means Clustering with Outliers.}  Suppose $P$ is a set of $n$ points in $\mathbb{R}^d$. Given two integers $1\leq z, k< n$, the problem of $k$-median clustering with $z$ outliers is to find a set of $k$ points $C=\{c_1, \cdots, c_k\}\subset\mathbb{R}^d$ and a subset $P'\subset P$ with $|P'|=n-z$, such that the following objective function
\begin{eqnarray}
\mathcal{K}_{1}^{-z} (P, C)=\frac{1}{n-z}\sum_{p\in P'}\min_{1\leq j\leq k}||p-c_j||\label{for-obj-k-med}
\end{eqnarray}
is minimized. Similarly, we have the objective function
\begin{eqnarray}
\mathcal{K}_{2}^{-z} (P, C)= \frac{1}{n-z}\sum_{p\in P'}\min_{1\leq j\leq k}||p-c_j||^2, \label{for-obj-k-mea}
\end{eqnarray}
for $k$-means clustering with outliers. The set $C$ is also called a solution of the instance $P$. Roughly speaking, given a solution $C$, the farthest $z$ points to $C$ are discarded, and the remaining subset $P'$ is partitioned into $k$ clusters where each point is assigned to its nearest neighbor of $C$.

\textbf{\rmnum{2}. Linear Regression with Outliers.} Given a vector $h=(h_1, h_2, \cdots, h_d)\in\mathbb{R}^d$, the linear function defined by $h$ is $y=\sum^{d-1}_{j=1}h_j x_{j}+h_d$ for $d-1$ real variables $x_1, x_2, \cdots, x_{d-1}$. Thus the linear function can be represented by the vector $h$. From geometric perspective, the linear function can be viewed as a $(d-1)$-dimensional hyperplane in the space.  Let $z$ be an integer between $1$ and $n$, and $P=\{p_1, p_2, \cdots, p_n\}$ be a set of $n$ points in $\mathbb{R}^d$, where each $p_i=(x_{i, 1}, x_{i, 2}, \cdots, x_{i, d-1}, y_i)$ for $1\leq i\leq n$; the objective is to find a subset $P'\subset P$ with $|P'|=n-z$ and a $(d-1)$-dimensional hyperplane, represented as a coefficient vector $h=(h_1, h_2, \cdots, h_d)\in\mathbb{R}^d$, such that 
\begin{eqnarray}
\mathcal{LR}_{1}^{-z} (P', h)=& \frac{1}{n-z}\sum_{p_i\in P'}\big |Res(p_i, h)\big| \label{for-obj-lr1}\\
\text{ or }\hspace{0.2in}\mathcal{LR}_{2}^{-z} (P', h)=&\frac{1}{n-z}\sum_{p_i\in P'}\big(Res(p_i, h)\big)^2 \label{for-obj-lr2}
\end{eqnarray}
is minimized. $Res(p_i, h)=y_i-\sum^{d-1}_{j=1}h_j x_{i,j}-h_d$ is the ``residual'' of $p_i$  to $h$.  The objective functions (\ref{for-obj-lr1}) and (\ref{for-obj-lr2}) are called the ``least absolute error''  and ``least squared error'', respectively.

\begin{remark}
\label{re-weight}
All the above problems can be extended to weighted case. Suppose each point $p$ has a non-negative weight $w(p)$, then the (squared) distance $||p-c_j||$ ($||p-c_j||^2$) is replaced by $w(p)\cdot||p-c_j||$ ($w(p)\cdot||p-c_j||^2$); we can perform the similar modification on $\big |Res(p_i, h)\big|$ and $\big(Res(p_i, h)\big)^2$ for the problem of linear regression with outliers. Moreover, the total weights of the outliers should be equal to $z$. Namely, we can view each point $p$ as $w(p)$ unit-weight overlapping points. 
\end{remark}

\textbf{Solution range.} To analyze the performance of our layered sampling method, we also need to define the ``solution range'' for the clustering  and linear regression problems. Consider the clustering problems  first. Given a clustering solution $\tilde{C}=\{\tilde{c}_1, \cdots, \tilde{c}_k\}\subset\mathbb{R}^d$ and $L>0$, we use ``$\tilde{C}\pm L$'' to denote the range of solutions $\mathcal{L}=$
\begin{eqnarray}
\Big\{C=\{c_1, \cdots, c_k\}\mid ||\tilde{c}_j-c_j||\leq L, \forall 1\leq j\leq k\Big\}. \label{for-range-clustering}
\end{eqnarray}

Next, we define the solution range for linear regression with outliers. Given an instance $P$, we often normalize the values in each of the first $d-1$ dimensions as the preprocessing step; without loss of generality, we can assume that $x_{i, j}\in [0, D]$ with some $D>0$ for any $1\leq i\leq n$ and $1\leq j\leq d-1$. For convenience, denote by $\mathcal{R}_D$ the region $\{(s_1, s_2, \cdots, s_d)\mid 0\leq s_j\leq D, \forall 1\leq j\leq d-1\} $ and thus $P\subset \mathcal{R}_D$ after the normalization. It is easy to see that the region $\mathcal{R}_D$ actually is a vertical square cylinder in the space. Given a coefficient vector (hyperplane) $\tilde{h}=(\tilde{h}_1, \tilde{h}_2, \cdots, \tilde{h}_d)\in\mathbb{R}^d$ and $L>0$, we use ``$\tilde{h}\pm L$'' to denote the range of hyperplanes $\mathcal{L}=$
\begin{eqnarray}
&&\Big\{h=(h_1, h_2, \cdots, h_d)\mid \nonumber\\
&&|Res(p, \tilde{h})-Res(p, h)|\leq L, \forall p\in  \mathcal{R}_D\Big\}. \label{for-range-regression}
\end{eqnarray}

To understand the range defined in (\ref{for-range-regression}), we can imagine two linear functions $\tilde{h}^+=(\tilde{h}_1, \tilde{h}_2, \cdots, \tilde{h}_d+L)$ and $\tilde{h}^-=(\tilde{h}_1, \tilde{h}_2, \cdots, \tilde{h}_d-L)$; if we only consider the region $ \mathcal{R}_D$, the range (\ref{for-range-regression}) contains all the linear functions ``sandwiched'' by $\tilde{h}^+$ and $\tilde{h}^-$.

For both the clustering and regression problems, we also say that \textbf{the size of the solution range} $\mathcal{L}$ is $|\mathcal{L}|=L$.

\section{The Layered Sampling Framework}
\label{sec-framework}
We present the overview of our layered sampling framework. 
 For the sake of completeness, we first introduce the coreset construction method for the ordinary $k$-median/means clustering proposed by~\cite{chen2009coresets}. 
 
Suppose $\alpha$ and $\beta\geq 1$. A ``bi-criteria $(\alpha, \beta)$-approximation'' means that it contains $\alpha k$ cluster centers, and the induced clustering cost is at most $\beta$ times the optimum. Usually, finding a bi-criteria approximation is much easier than achieving a single-criterion approximation. For example, one can obtain a bi-criteria approximation for $k$-median/means clustering in linear time with $\alpha=O(1)$ and $\beta=O(1)$~\cite{chen2009coresets}. Let $T=\{t_1, t_2, \cdots, t_{\alpha k}\}\subset\mathbb{R}^d$ be the obtained $(\alpha, \beta)$-approximate solution of the input instance $P$. For convenience, we use $\mathbb{B}(c, r)$ to denote the ball centered at a point $c$ with radius $r>0$. 
At the beginning of  Chen's coreset construction algorithm, it takes two carefully designed values $r>0$ and $N=O(\log n)$, and partitions the space into $N+1$ layers $H_0, H_1, \cdots, H_N$, where $H_0=\cup^{\alpha k}_{j=1}\mathbb{B}(t_j, r)$ and $H_{i}=\big(\cup^{\alpha k}_{j=1}\mathbb{B}(t_j, 2^i r)\big)\setminus\big(\cup^{\alpha k}_{j=1}\mathbb{B}(t_j, 2^{i-1}r)\big)$ for $1\leq i\leq N$. It can be proved that $P$ is covered by $\cup^N_{i=0}H_i$; then the algorithm takes a random sample $S_i$ from each layer $P\cap H_i$, and the union $\cup^N_{i=0}S_i$ forms the desired coreset $S$ satisfying the condition (\ref{for-gcoreset}). 

However, this approach cannot directly solve the case with outliers. First, it is not easy to obtain a bi-criteria approximation for the problem of $k$-median/means clustering with outliers ({\em e.g.,} in linear time). Moreover, it is challenging to guarantee the condition (\ref{for-gcoreset}) for any feasible solution $C$, because the set of outliers could change when $C$ changes (this is also the major challenge for proving the correctness of our method later on). We propose a modified version of Chen's coreset construction method and  aim to guarantee (\ref{for-gcoreset2}) for a local range of solutions. We take the $k$-median clustering with outliers problem as an example. Let $\tilde{C}=\{\tilde{c}_1, \cdots, \tilde{c}_k\}\subset\mathbb{R}^d$ be a given solution. Assume $\epsilon>0$ and $N\in \mathbb{Z}^+$ are two pre-specified parameters. With a slight abuse of notations, we still use $H_0, H_1, \cdots, H_N$ to denote the layers surrounding $\tilde{C}$, {\em i.e.,}  
\begin{eqnarray}
H_0&=&\cup^{k}_{j=1}\mathbb{B}(\tilde{c}_j, r); \label{for-frame5}\\
H_{i}&=&\big(\cup^{ k}_{j=1}\mathbb{B}(\tilde{c}_j, 2^i r)\big)\setminus\big(\cup^{ k}_{j=1}\mathbb{B}(\tilde{c}_j, 2^{i-1}r)\big) \nonumber\\
&&\text{ for }1\leq i\leq N. \label{for-frame6}
\end{eqnarray}
In addition, let
\begin{eqnarray}
H_{out}=\mathbb{R}^d\setminus \big(\cup^{ k}_{j=1}\mathbb{B}(\tilde{c}_j, 2^N r)\big).\label{for-frame7}
\end{eqnarray}
Here, we set the value $r$ to satisfy the following condition:
\begin{eqnarray}
\Big|P\cap H_{out}\Big|=(1+\frac{1}{\epsilon})z. \label{for-frame1}
\end{eqnarray}
That is, the union of the layers $\cup^N_{i=0}H_i$ covers $n-(1+\frac{1}{\epsilon})z$ points of $P$ and excludes the farthest $(1+\frac{1}{\epsilon})z$. Obviously, such a value $r$ always exists. Suppose $P'$ is the set of $n-z$ inliers induced by $\tilde{C}$, and then we have 
\begin{eqnarray}
2^N r&\leq&\frac{\epsilon}{z}\sum_{p\in P'}\min_{1\leq j\leq k}||p-\tilde{c}_j||\nonumber\\
&=& \frac{\epsilon}{z}(n-z)\mathcal{K}^{-z}_1(P, \tilde{C}) \label{for-r}
\end{eqnarray}
via the Markov's inequality. Our new coreset contains the following $N+2$ parts:
\begin{eqnarray}
S=S_0\cup S_1\cup\cdots\cup S_N\cup S_{out},\label{for-frame2}
\end{eqnarray}
where $S_i$ is still a random sample from $P\cap H_i$ for $0\leq i\leq N$, and $S_{out}$ contains all the $(1+\frac{1}{\epsilon})z$ points in $H_{out}$. In Section~\ref{sec-construction}, we will show that the coreset $S$ of (\ref{for-frame2}) satisfies (\ref{for-gcoreset2}) for the $k$-median clustering with outliers problem (and similarly for the $k$-means clustering with outliers problem).

For the linear regression with outliers problem, we apply the similar layered sampling framework. Define $\mathbb{S}(h, r)$ to be the slab centered at a $(d-1)$-dimensional hyperplane $h$ with $r>0$, {\em i.e.,} $\mathbb{S}(h, r)=\{p\in \mathbb{R}^d\mid -r\leq Res(p, h)\leq r\}$.  Let $P$ be an instance, and $\tilde{h}=(\tilde{h}_1, \cdots, \tilde{h}_d)\in\mathbb{R}^d$ be a given hyperplane. We divide the space into $N+2$ layers $H_0, H_1, \cdots, H_N, H_{out}$, where 
\begin{eqnarray}
H_0&=&\mathbb{S}(\tilde{h}, r); \label{for-frame8}\\
H_i&=&\mathbb{S}(\tilde{h}, 2^i r)\setminus \mathbb{S}(\tilde{h}, 2^{i-1}r) \text{ for } 1\leq i\leq N;\label{for-frame9}\\
 H_{out}&=&\mathbb{R}^d\setminus \mathbb{S}(\tilde{h}, 2^{N}r). \label{for-frame10}
 \end{eqnarray} 
Similar to (\ref{for-frame1}), we also require the value $r$ to satisfy the following condition:
\begin{eqnarray}
\Big|P\cap H_{out}\Big|=\Big|P\setminus \mathbb{S}(\tilde{h}_N, 2^N r)\Big|=(1+\frac{1}{\epsilon})z. \label{for-frame3}
\end{eqnarray}
And consequently, we have
\begin{eqnarray}
2^N r\leq \frac{\epsilon}{z}(n-z)\mathcal{LR}^{-z}_1(P, \tilde{h}). \label{for-frame4}
\end{eqnarray}
Then, we construct the coreset for linear regression with outliers by the same manner of (\ref{for-frame2}).

\section{$k$-Median/Means Clustering with Outliers}
\label{sec-construction}
In this section, we provide the details on applying our layered sampling framework to the problem of $k$-median clustering with outliers. 
See Algorithm~\ref{alg-core}. 
The algorithm and analysis can be easily modified to handle $k$-means clustering with outliers, where the only difference is that we need to replace (\ref{for-r}) by `` $2^N r\leq \sqrt{\frac{\epsilon}{z}(n-z)\mathcal{K}^{-z}_2(P, \tilde{C})}$ ''.

\begin{algorithm}[tb]
   \caption{\sc{Layered Sampling for $k$-Med-Outlier}}
   \label{alg-core}
\begin{algorithmic}
  \STATE {\bfseries Input:} An instance $P\subset\mathbb{R}^d$ of $k$-median clustering with $z$ outliers, a solution $\tilde{C}=\{\tilde{c}_1, \cdots, \tilde{c}_k\}$, and two parameters $\epsilon, \eta\in (0,1)$.
   \STATE
\begin{enumerate}
\item Let $\gamma=z/(n-z)$ and $N=\lceil\log \frac{1}{\gamma}\rceil$. Compute the value $r$ satisfying (\ref{for-frame1}). 
\item As described in (\ref{for-frame5}), (\ref{for-frame6}), and (\ref{for-frame7}), the space is partitioned into $N+2$ layers $H_0, H_1, \cdots, H_N$ and $H_{out}$.
\item Randomly sample $\min\Big\{O(\frac{1}{\epsilon^2}kd\log\frac{d}{\epsilon}\log\frac{N}{\eta}), |P\cap H_i|\Big\}$ points, denoted by $S_{i}$, from $P\cap H_i$ for $0\leq i\leq N$.
\item For each point $p\in S_{i}$, set its weight to be $|P\cap H_i|/|S_i|$; let $S_H=\cup^N_{i=0}S_i$.
\end{enumerate}
 
  \STATE {\bfseries Output} $S=S_H\cup (P\cap H_{out})$.
\end{algorithmic}
\end{algorithm}

\begin{theorem}
\label{the-main}
Algorithm~\ref{alg-core} returns a point set $S$ having the size $|S|=\tilde{O}\footnote{The asymptotic notation $\tilde{O}(f)=O\big(f\cdot polylog(\frac{d}{\gamma\epsilon\eta})\big)$.}(\frac{1}{\epsilon^2}kd)+(1+\frac{1}{\epsilon})z$. Moreover, with probability at least $1-\eta$, for any $L>0$ and any solution $C\in \tilde{C}\pm L$, we have 
\begin{eqnarray}
\mathcal{K}_{1}^{-z} (S, C) \in\mathcal{K}_{1}^{-z} (P, C)\pm\epsilon\big(\mathcal{K}_{1}^{-z} (P, \tilde{C})+L\big).\label{for-main}
\end{eqnarray}
Here, $S$ is a weighted instance of $k$-median clustering with outliers, and the total weight of outliers is $z$ (see Remark~\ref{re-weight}).
 \end{theorem}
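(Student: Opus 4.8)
The plan is to decompose the objective $\mathcal{K}_1^{-z}(\cdot, C)$ into the contribution from the ``inner'' part $\cup_{i=0}^N H_i$ (handled by the random samples $S_H$) and the ``outer'' part $H_{out}$ (kept exactly in $S$), and to control each separately. First I would fix an arbitrary solution $C\in\tilde C\pm L$ and compare the cost of $P$ and the cost of $S$. Since $S_{out}=P\cap H_{out}$ is retained verbatim with unit weights, the outer contribution is identical on both sides; the entire error therefore comes from the sampled layers and from the subtle issue of which points are designated as outliers. The key structural observation to exploit is condition~(\ref{for-frame1}): $|P\cap H_{out}|=(1+\tfrac1\epsilon)z$, so for \emph{any} solution $C$ the $z$ points discarded as outliers — whether for $P$ or for $S$ — can be taken to lie inside $H_{out}$ (the points in $H_{out}$ are far from $\tilde C$, hence by the triangle inequality and $L$ being comparable to the layer scales, still among the farthest from $C$). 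This means that on the inner layers every point is an inlier for both instances, so on $\cup_{i=0}^N H_i$ we are in the ordinary (outlier-free) clustering regime and Chen-style layer-by-layer sampling applies.

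Next I would carry out the concentration argument layer by layer, mimicking \cite{chen2009coresets}. For a fixed $C$ and each layer $i$, the per-point cost $\min_j\|p-c_j\|$ for $p\in H_i$ lies in a bounded range (roughly $[0, 2^i r + L]$ for $i\geq 1$, and $[0, r+L]$ for $i=0$), so a Hoeffding/Bernstein bound shows that the reweighted sample sum $\frac{|P\cap H_i|}{|S_i|}\sum_{p\in S_i}\min_j\|p-c_j\|$ approximates $\sum_{p\in P\cap H_i}\min_j\|p-c_j\|$ up to additive error about $\epsilon\,|P\cap H_i|\cdot 2^i r / \text{(something)}$, with the stated sample size $O(\epsilon^{-2}kd\log(d/\epsilon)\log(N/\eta))$. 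Summing the geometric series $\sum_i 2^i r$ telescopes to $O(2^N r)$, and then I invoke (\ref{for-r}), $2^N r \leq \frac{\epsilon}{z}(n-z)\mathcal{K}_1^{-z}(P,\tilde C)$, to convert this into the promised bound $\epsilon(\mathcal{K}_1^{-z}(P,\tilde C)+L)$ after dividing by $n-z$; the $L$ term absorbs the $\sum_i |P\cap H_i|\cdot L$ contributions. To make this hold simultaneously for all $C\in\tilde C\pm L$ I would pass to an $\epsilon$-net over the local range $\tilde C\pm L$ (of size $(1/\epsilon)^{O(kd)}$, which is where the $kd$ factor and the $\log(d/\epsilon)$ come from), establish the bound on the net via a union bound over the net and over the $N+1$ layers, and then extend to all $C$ by a Lipschitz/perturbation argument — moving $C$ within the net resolution changes each per-point distance by at most the net radius, hence changes both $\mathcal{K}_1^{-z}(P,C)$ and $\mathcal{K}_1^{-z}(S,C)$ negligibly. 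Finally I would assemble: size bound is immediate from step 3–4 of the algorithm plus $|P\cap H_{out}| = (1+\tfrac1\epsilon)z$; the failure probability $\eta$ comes from the union bound.

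The main obstacle I anticipate is \emph{making the outlier sets consistent between $P$ and $S$}. When we pass from $P$ to the weighted set $S$, a sampled point in layer $i$ carries weight $|P\cap H_i|/|S_i|$, so the ``number'' of outliers, which must total $z$, is realized by fractional/weighted points (per Remark~\ref{re-weight}); one must argue that the optimal way to remove weight $z$ from $S$ for solution $C$ still removes only weight lying in $H_{out}\cap S$, and symmetrically for $P$, so that the comparison reduces cleanly to the inner layers. This requires showing that for every $C\in\tilde C\pm L$, the farthest $z$-worth of points (in $P$ or in $S$) all sit in $H_{out}$ — which uses that $2^{N}r$ dominates $L$ up to constants (true because $N=\lceil\log(1/\gamma)\rceil$ and the hypotheses make $L$ small relative to the clustering scale, or else the bound (\ref{for-main}) is trivially loose since its right-hand side contains $+\epsilon L$) together with the fact that $|P\cap H_{out}|=(1+\tfrac1\epsilon)z > z$ gives slack. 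Getting the quantifier order right here — a single net, then all $C$, then all $L$ — and verifying the weighted-outlier bookkeeping is the delicate part; the concentration and the geometric-sum estimates are routine once that is in place.
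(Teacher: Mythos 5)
The central structural claim in your proposal does not hold, and it is exactly the point where the paper has to work hardest. You assert that for every $C\in\tilde C\pm L$ the $z$ discarded points --- for $P$ and for the weighted set $S$ alike --- can be taken to lie inside $H_{out}$, so that on $\cup_{i=0}^N H_i$ one is back in the outlier-free regime. This is false in general: a point $p$ in layer $H_N$ satisfies only $d(p,\tilde C)\le 2^N r$, hence $d(p,C)$ can be as large as $2^N r+L$, while a point $q\in H_{out}$ just beyond the boundary can have $d(q,C)$ barely above $2^N r-L$; since $H_{out}$ contains $(1+\frac{1}{\epsilon})z>z$ points, the $z$ points farthest from $C$ may well include inner-layer points, for every $L>0$, however small. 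Your fallback (``$L$ is small relative to the clustering scale, or else the bound is trivially loose'') is not a proof: Theorem~\ref{the-main} is claimed for \emph{all} $L>0$, and when $L$ is large the right-hand side of (\ref{for-main}) is large but the actual discrepancy still has to be bounded --- a priori the outlier mismatch could involve up to $z$ points, each contributing $2^N r+L$, with weighted points on the $S$ side whose inlier/outlier split is fractional. The paper resolves this with the case analysis of Sections~\ref{sec-case1}--\ref{sec-case2}: Lemma~\ref{cla-main1} handles the clean case $P_H\setminus P^C_{in}=\emptyset$, and in the other case Claim~\ref{cla-main2-2} (nestedness of $S^C_{in}\setminus P_H$ and $P^C_{in}\setminus P_H$), Lemma~\ref{lem-main2-1} (every relevant point has $x_p\le 2^N r+L$), and a weight-counting argument using $w(S_H)=|P_H|$ cap the mismatched mass by $O(z)$, yielding an extra error $\frac{2z}{n-z}(2^N r+L)$, which is $O(\epsilon)\big(\mathcal{K}_1^{-z}(P,\tilde C)+L\big)$ because either $\gamma\ge\epsilon$ (and then $S=P$, nothing to prove) or $\gamma<\epsilon$ combined with (\ref{for-r}). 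None of this is supplied by your reduction, so the proposal as written has a genuine gap at its core.

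A secondary, smaller issue is the error aggregation over layers. The Hoeffding bound controls the \emph{average} per layer, so the reweighted sum for layer $i$ is off by roughly $\epsilon\,|P\cap H_i|(2^i r+L)$, and you must bound $\sum_i |P\cap H_i|\,2^i r$, not $\sum_i 2^i r$; summing the bare geometric series to $O(2^N r)$ and multiplying by a crude count of points is too lossy when $z\ll\epsilon n$. The paper's Lemma~\ref{the-sample} instead uses that points of $H_i$ ($i\ge 1$) lie at distance at least $2^{i-1}r$ from $\tilde C$ and that $P_H$ consists of inliers with respect to $\tilde C$, giving $\sum_i\frac{|P\cap H_i|}{n-z}2^i r\le r+2\mathcal{K}_1^{-z}(P,\tilde C)$ with $r\le\epsilon\,\mathcal{K}_1^{-z}(P,\tilde C)$ from $N=\lceil\log\frac{1}{\gamma}\rceil$ and (\ref{for-r}). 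Your net-plus-Lipschitz discretization of $\tilde C\pm L$ matches the paper's grid argument and is fine.
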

\begin{remark}
(1) The running time of Algorithm~\ref{alg-core} is $O(knd)$. For each point $p\in P$, we compute its shortest distance to $\tilde{C}$, $\min_{1\leq j\leq k}||p-\tilde{c}_j||$; then select the farthest $(1+1/\epsilon)z$ points and compute the value $r$ by running the linear time selection algorithm~\cite{blum1973time}; finally, we obtain the $N+1$ layers $H_i$ with $0\leq i\leq N$ and take the samples $S_0, S_1, \cdots, S_N$ from them. 

(2) Comparing with the standard coreset (\ref{for-gcoreset}), our result contains an additive error $\epsilon\big(\mathcal{K}_{1}^{-z} (P, \tilde{C})+L\big)$ in (\ref{for-main}) that depends on the initial objective value $\mathcal{K}_{1}^{-z} (P, \tilde{C})$ and the size $L$ of the solution range. In particular, the smaller the range size $L$, the lower the error of our coreset. 

(3) The algorithm of \cite{DBLP:conf/nips/ChenA018} also returns a summary for compressing the input data. But there are two major differences comparing with our result. First, their summary guarantees a constant factor of distortion on the clustering cost, while our error approaches $0$ if $\epsilon$ is small enough. Second, their construction algorithm (called ``successive sampling'' from~\cite{mettu2004optimal}) needs to scan the data multiple passes, while our Algorithm~\ref{alg-core} is much simpler and only needs to read the data in one-pass. We also compare these two methods in our experiments.
\end{remark}

To prove Theorem~\ref{the-main}, we first show that $S_H$ is a good approximation of $P\setminus H_{out}$. Fixing a solution $C\in \tilde{C}\pm L$, we view the distance from each point $p\in P $ to $C$, {\em i.e.}, $\min_{1\leq j\leq k}||p-c_j||$, as a random variable $x_p$. For any point $p\in P\cap H_i$ with $0\leq i\leq N$, we have the following bounds for $x_p$. Suppose $p$ is covered by $ \mathbb{B}(\tilde{c}_{j_1}, 2^i r)$. Let the nearest neighbor of $p$ in $C$ be $c_{j_2}$. Then, we have the upper bound
\begin{eqnarray}
x_p&=&||p-c_{j_2}||\leq ||p-c_{j_1}||\nonumber\\
&\leq& ||p-\tilde{c}_{j_1}||+||\tilde{c}_{j_1}-c_{j_1}||\leq 2^i r+L. \label{for-vbound}
\end{eqnarray}
Similarly, we have the lower bound
\begin{equation}
 \left.\begin{aligned}
       x_p&\geq\max\{ 2^{i-1}r-L, 0\}  &\text{if $i\geq 1$; }\\
         x_p&\geq 0  &\text{if $i=0$.}
                \end{aligned}
\hspace{0.3in} \right  \} \label{for-vboundlower}
  \qquad  
\end{equation}

Therefore, we can take a sufficiently large random sample $\hat{S}_{i}$ from $P\cap H_i$, such that $\frac{1}{|\hat{S}_{i}|}\sum_{p\in \hat{S}_{i}}x_p\approx \frac{1}{|P\cap H_i|}\sum_{p\in P\cap H_i}x_p$ with certain probability. Specifically, combining  (\ref{for-vbound}) and (\ref{for-vboundlower}), we have the following lemma through the Hoeffding's inequality.
\begin{lemma}
\label{lem-sample1}
Let $\eta\in (0,1)$. If we randomly sample $O( \frac{1}{\epsilon^2}\log \frac{1}{\eta})$ points, denote by $\hat{S}_{i}$, from $P\cap H_i$, then with probability $1-\eta$, 
\begin{eqnarray}
\big|\frac{1}{|\hat{S}_{i}|}\sum_{p\in \hat{S}_{i}}x_p- \frac{1}{|P\cap H_i|}\sum_{p\in P\cap H_i}x_p\big|\leq  \epsilon (2^{i}r+2L). \nonumber
\end{eqnarray}
\end{lemma}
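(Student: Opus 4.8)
The plan is to apply Hoeffding's inequality to the values $\{x_p : p\in P\cap H_i\}$, using the pointwise bounds (\ref{for-vbound}) and (\ref{for-vboundlower}) to control the range of each sampled variable. First I would note that the solution $C\in\tilde{C}\pm L$ is fixed throughout, so the only source of randomness is the draw of $\hat{S}_i$. By (\ref{for-vbound}) and (\ref{for-vboundlower}), every $p\in P\cap H_i$ satisfies $x_p\in[\ell_i,u_i]$ with $u_i=2^ir+L$ and $\ell_i=\max\{2^{i-1}r-L,\,0\}$ for $i\ge 1$ (and $\ell_0=0$). A short case check shows the length of this interval is always at most $\Delta_i:=2^ir+2L$: when $2^{i-1}r-L\ge 0$ we get $u_i-\ell_i=2^{i-1}r+2L\le\Delta_i$, and otherwise $\ell_i=0$ so $u_i-\ell_i=2^ir+L\le\Delta_i$; the case $i=0$ gives $u_0-\ell_0=r+L\le\Delta_0$. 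Hence all relevant variables lie in a common interval of length at most $\Delta_i$.

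Next I would handle the two regimes of the sample size. If $|P\cap H_i|\le m:=O(\frac{1}{\epsilon^2}\log\frac{1}{\eta})$, the whole layer is kept, so $\frac{1}{|\hat{S}_i|}\sum_{p\in\hat{S}_i}x_p$ and $\frac{1}{|P\cap H_i|}\sum_{p\in P\cap H_i}x_p$ coincide and the bound holds trivially; so assume $|P\cap H_i|>m$. Writing $X_1,\dots,X_m$ for the values $x_p$ at the $m$ sampled points, each $X_j$ is uniform over $\{x_p:p\in P\cap H_i\}$, so $\mathbb{E}\big[\frac{1}{m}\sum_j X_j\big]$ equals the population average $\frac{1}{|P\cap H_i|}\sum_{p\in P\cap H_i}x_p$; and in both cases Hoeffding's inequality applies --- with replacement directly, and without replacement by the standard fact that sampling without replacement is at least as concentrated (Hoeffding; Serfling).

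Then the conclusion is a direct substitution: since each $X_j$ lies in an interval of length at most $\Delta_i$, Hoeffding's inequality yields
\[
\Pr\!\left[\,\Big|\tfrac{1}{m}\sum_{j=1}^{m}X_j-\tfrac{1}{|P\cap H_i|}\sum_{p\in P\cap H_i}x_p\Big|>t\,\right]\le 2\exp\!\left(-\frac{2mt^2}{\Delta_i^2}\right).
\]
Taking $t=\epsilon\Delta_i=\epsilon(2^ir+2L)$ makes the right-hand side $2e^{-2m\epsilon^2}$, which is at most $\eta$ as soon as $m\ge\frac{1}{2\epsilon^2}\ln\frac{2}{\eta}=O(\frac{1}{\epsilon^2}\log\frac{1}{\eta})$, matching the claimed sample size and deviation bound.

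I do not expect a genuine obstacle in this lemma; the only points needing a little care are (i) the case analysis for the interval length, specifically the regime $2^{i-1}r-L<0$ where the range must be bounded via $u_i$ rather than $u_i-\ell_i$ directly, and (ii) justifying Hoeffding under sampling without replacement, which is standard. The heavier lifting --- combining these per-layer, per-solution estimates across all $N+1$ layers and over an $\epsilon$-net of solutions in $\tilde{C}\pm L$, and separately controlling the outermost part $H_{out}$ and the outlier count --- is what is deferred to the proof of Theorem~\ref{the-main}.
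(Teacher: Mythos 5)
Your proof is correct and follows exactly the route the paper intends: the paper derives Lemma~\ref{lem-sample1} by combining the pointwise bounds (\ref{for-vbound}) and (\ref{for-vboundlower}) with Hoeffding's inequality, which is precisely your argument, with the interval-length case analysis and the without-replacement remark being standard details the paper leaves implicit. No gap.
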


Lemma~\ref{lem-sample1} is only for a fixed solution $C$. To guarantee the result for any $C\in \tilde{C}\pm L$, we discretize the range $\tilde{C}\pm L$. 
Imagine that we build a grid inside each $\mathbb{B}(\tilde{c}_j, L)$ for $1\leq j\leq k$, where the grid side length is $\frac{\epsilon}{\sqrt{d}} L$. Denote by $G_j$ the set of grid points inside each $\mathbb{B}(\tilde{c}_j, L)$, and then $\mathcal{G}=G_1\times G_2\times\cdots\times G_k$ contains $O\Big(\big(\frac{2\sqrt{d}}{\epsilon}\big)^{kd}\Big)$ $k$-tuple points of $\tilde{C}\pm L$ in total. 
We increase the sample size in Lemma~\ref{lem-sample1} via replacing $\eta$ by $\frac{\eta}{N\cdot |\mathcal{G}|}$ in the sample size ``$O( \frac{1}{\epsilon^2}\log \frac{1}{\eta})$''. As a consequence, through taking the union bound for the success probability, we have the following result.

\begin{lemma}
\label{lem-sample2}
 $S_{i}$ is the sample obtained from $P\cap H_i$ in Step~3 of Algorithm~\ref{alg-core} for $0\leq i\leq N$.  Then with probability $1-\eta$,  
\begin{eqnarray}
\big|\frac{1}{|S_{i}|}\sum_{p\in S_{i}}x_p- \frac{1}{|P\cap H_i|}\sum_{p\in P\cap H_i}x_p\big|\leq  \epsilon (2^{i}r+2L). \nonumber
\end{eqnarray}
for each $i=\{0, 1, \cdots, N\}$ and  any $C\in \mathcal{G}$.
\end{lemma}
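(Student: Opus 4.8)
The plan is to derive Lemma~\ref{lem-sample2} from Lemma~\ref{lem-sample1} by a union bound over the finite family $\mathcal{G}$ and the $N+1$ layers, after first disposing of the degenerate branch of Step~3 in which an entire layer is taken as its own sample.

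First I would fix a layer index $i\in\{0,1,\dots,N\}$ and a solution $C\in\mathcal{G}$. If $|P\cap H_i|$ is at most the threshold $O(\frac{1}{\epsilon^2}kd\log\frac{d}{\epsilon}\log\frac{N}{\eta})$ appearing in Step~3, then $S_i=P\cap H_i$, the two averages in the statement coincide, and the asserted inequality holds with its left-hand side equal to $0$; this branch is deterministic and never fails. Otherwise $S_i$ consists of $m=O(\frac{1}{\epsilon^2}kd\log\frac{d}{\epsilon}\log\frac{N}{\eta})$ points drawn i.i.d.\ uniformly from $P\cap H_i$. Because every point of $G_j$ lies in $\mathbb{B}(\tilde{c}_j,L)$, each $C\in\mathcal{G}$ belongs to $\tilde{C}\pm L$, so the pointwise estimates (\ref{for-vbound}) and (\ref{for-vboundlower}) are valid for all $p\in P\cap H_i$; in particular $x_p$ takes values in an interval of length at most $2^i r+2L$. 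Hence Lemma~\ref{lem-sample1} applies to this fixed pair $(i,C)$, except that I replace its failure probability $\eta$ by $\eta':=\eta/\big((N+1)|\mathcal{G}|\big)$. The sample size it then requires, $O(\frac{1}{\epsilon^2}\log\frac{1}{\eta'})$, equals $O\!\big(\frac{1}{\epsilon^2}(\log\frac{N}{\eta}+kd\log\frac{d}{\epsilon})\big)$ since $|\mathcal{G}|=O\big((2\sqrt{d}/\epsilon)^{kd}\big)$, and this does not exceed the value $m$ actually used in Step~3 (up to the hidden constant). Therefore, with probability at least $1-\eta'$,
\begin{eqnarray}
\Big|\frac{1}{|S_i|}\sum_{p\in S_i}x_p-\frac{1}{|P\cap H_i|}\sum_{p\in P\cap H_i}x_p\Big|\leq\epsilon(2^i r+2L).\nonumber
\end{eqnarray}

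Then I would take the union bound over all $(N+1)\cdot|\mathcal{G}|$ pairs $(i,C)$ with $i\in\{0,\dots,N\}$ and $C\in\mathcal{G}$: the total failure probability is at most $(N+1)|\mathcal{G}|\cdot\eta'=\eta$, which is exactly the claim. (Layers handled by the deterministic branch contribute nothing, so one could equivalently union-bound only over the randomized layers.)

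I do not expect a genuine obstacle here, as the probabilistic content already resides in Lemma~\ref{lem-sample1}; the points that need care are the case split forced by the $\min\{\cdot,|P\cap H_i|\}$ in Step~3, the bookkeeping check that $\log\frac{(N+1)|\mathcal{G}|}{\eta}=O\!\big(kd\log\frac{d}{\epsilon}\log\frac{N}{\eta}\big)$ so that the inflated sample size still matches Step~3, and the remark that $\mathcal{G}\subset\tilde{C}\pm L$ so that (\ref{for-vbound})--(\ref{for-vboundlower}) may legitimately be invoked for each $C\in\mathcal{G}$. The genuinely delicate step — transferring the guarantee from the net $\mathcal{G}$ to an arbitrary $C\in\tilde{C}\pm L$, where one argues that moving $C$ within a single grid cell of side $\frac{\epsilon}{\sqrt{d}}L$ perturbs each $x_p$ by at most $\epsilon L$ — is not part of Lemma~\ref{lem-sample2} and is deferred to the subsequent analysis.
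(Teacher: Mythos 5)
Your proposal is correct and matches the paper's own argument: the paper likewise proves Lemma~\ref{lem-sample2} by inflating the sample size of Lemma~\ref{lem-sample1} (replacing $\eta$ with $\eta/(N\cdot|\mathcal{G}|)$) and taking a union bound over the $N+1$ layers and the grid solutions $\mathcal{G}$, which is exactly what Step~3's sample size accommodates. Your extra remarks about the deterministic branch $S_i=P\cap H_i$ and the inclusion $\mathcal{G}\subset\tilde{C}\pm L$ are just careful bookkeeping the paper leaves implicit.
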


Next, we show that for any $C\in \tilde{C}\pm L$ (in particular the solutions in $\big(\tilde{C}\pm L\big)\setminus \mathcal{G}$), Lemma~\ref{lem-sample2} is true. For any solution $C=\{c_1, \cdots, c_k\}\in \tilde{C}\pm L$, let $C'=\{c'_1, \cdots, c'_k\}$ be its nearest neighbor in $\mathcal{G}$, {\em i.e.}, $c'_j$ is the grid point of the cell containing $c_j$ in $G_j$, for $1\leq j\leq k$. Also, denote by ${x'}_p$ the distance $\min_{1\leq j\leq k}||p-c'_j||$. Then we consider to bound the error $\big|\frac{1}{|S_{i}|}\sum_{p\in S_{i}}x_p- \frac{1}{|P\cap H_i|}\sum_{p\in P\cap H_i}x_p\big|$ through $C'$. By using the triangle inequality, we have
\begin{eqnarray}
&&\big|\frac{1}{|S_{i}|}\sum_{p\in S_{i}}x_p- \frac{1}{|P\cap H_i|}\sum_{p\in P\cap H_i}x_p\big|\label{for-triangle} \\
&\leq& \underbrace{\big|\frac{1}{|S_{i}|}\sum_{p\in S_{i}}x_p- \frac{1}{|S_{i}|}\sum_{p\in S_{i}}x'_p\big|}_{\text{(a)}}\nonumber\\
&+&\underbrace{\big|\frac{1}{|S_{i}|}\sum_{p\in S_{i}}x'_p- \frac{1}{|P\cap H_i|}\sum_{p\in P\cap H_i}x'_p\big|}_{\text{(b)}}\nonumber\\
&+&\underbrace{\big|\frac{1}{|P\cap H_i|}\sum_{p\in P\cap H_i}x'_p- \frac{1}{|P\cap H_i|}\sum_{p\in P\cap H_i}x_p\big|}_{\text{(c)}}. \nonumber
\end{eqnarray}
In (\ref{for-triangle}), the term (b) is bounded by Lemma~\ref{lem-sample2} since $C'\in \mathcal{G}$. To bound the terms (a) and (c), we study the difference $|x_p-x'_p|$ for each point $p$. Suppose the nearest neighbor of $p$ in $C$ ({\em resp.}, $C'$) is $c_{j_1}$ ({\em resp.}, $c'_{j_2}$). Then, 
\begin{eqnarray}
x_p&=&||p-c_{j_1}||\leq ||p-c_{j_2}||\nonumber\\
&\leq &||p-c'_{j_2}||+||c'_{j_2}-c_{j_2}||\nonumber\\
&\leq& ||p-c'_{j_2}||+\epsilon L=x'_p+\epsilon L,
\end{eqnarray}
where the last inequality comes from the fact that $c'_{j_2}$ and $c_{j_2}$ are in the same grid cell with side length $\frac{\epsilon}{\sqrt{d}}L$. Similarly, we have $x'_p\leq x_p+\epsilon L$. Overall, $|x_p-x'_p|\leq \epsilon L$. 
As a consequence, the terms (a) and (c) in (\ref{for-triangle}) are both bounded by $\epsilon L$. Overall, (\ref{for-triangle}) becomes 
\begin{eqnarray}
&&\big|\frac{1}{|S_{i}|}\sum_{p\in S_{i}}x_p- \frac{1}{|P\cap H_i|}\sum_{p\in P\cap H_i}x_p\big|\nonumber\\
&\leq&  O(\epsilon)(2^{i}r+L). \label{for-sample22}
\end{eqnarray}

For convenience, we use $P_H$ to denote the set $\cup^N_{i=0}(P\cap H_i)$. 

\begin{lemma}
\label{the-sample}
Let $S_0, S_1, \cdots, S_N$ be the samples obtained in Algorithm~\ref{alg-core}. Then, with probability $1-\eta$,
\begin{eqnarray}
&&\frac{1}{n-z}\big|\sum^N_{i=0}\frac{|P\cap H_i|}{|S_i|}\sum_{p\in S_i}x_p-\sum_{p\in P_H}x_p\big|\nonumber\\
&\leq& O(\epsilon)\Big(\mathcal{K}^{-z}_1(P, \tilde{C})+L\Big) \label{for-the-sample}
\end{eqnarray}
for any $C\in\tilde{C}\pm L$.
\end{lemma}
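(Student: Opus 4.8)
The plan is to obtain Lemma~\ref{the-sample} as a size-weighted aggregation of the single-layer estimate~(\ref{for-sample22}) over the $N+1$ sampled layers $H_0,\dots,H_N$. Recall that the success event of Lemma~\ref{lem-sample2} already absorbs a union bound over the $N\cdot|\mathcal{G}|$ relevant events, so it holds with probability $1-\eta$, and on that event the deterministic discretization argument that produced~(\ref{for-sample22}) gives, simultaneously for every $0\le i\le N$ and every $C\in\tilde{C}\pm L$, the bound $\big|\frac{1}{|S_i|}\sum_{p\in S_i}x_p-\frac{1}{|P\cap H_i|}\sum_{p\in P\cap H_i}x_p\big|\le O(\epsilon)(2^ir+L)$. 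In particular no extra union bound over the layers is needed. Multiplying the $i$-th inequality by $|P\cap H_i|$, summing over $i=0,\dots,N$, and applying the triangle inequality yields
\begin{eqnarray}
\big|\sum^N_{i=0}\frac{|P\cap H_i|}{|S_i|}\sum_{p\in S_i}x_p-\sum_{p\in P_H}x_p\big|\le O(\epsilon)\sum^N_{i=0}|P\cap H_i|\,(2^ir+L), \nonumber
\end{eqnarray}
so it remains to show that, after dividing by $n-z$, the right-hand side is $O(\epsilon)\big(\mathcal{K}^{-z}_1(P,\tilde{C})+L\big)$.

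I would split the right-hand side into the ``$L$ part'' and the ``$2^ir$ part''. Since $\sum^N_{i=0}|P\cap H_i|=|P_H|=n-(1+\frac1\epsilon)z\le n-z$ by~(\ref{for-frame1}), the $L$ part contributes at most $O(\epsilon)L$ after dividing by $n-z$. For the $2^ir$ part I treat the innermost layer separately. For $i=0$ there is no useful lower bound on the distances to $\tilde{C}$, so instead I bound $r$ directly: combining~(\ref{for-r}) with the choice $N=\lceil\log\frac1\gamma\rceil$, which gives $2^N\ge 1/\gamma=(n-z)/z$, yields $r\le\epsilon\,\mathcal{K}^{-z}_1(P,\tilde{C})$, hence $\frac{1}{n-z}|P\cap H_0|\,r\le r\le\epsilon\,\mathcal{K}^{-z}_1(P,\tilde{C})$. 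For $i\ge1$, every $p\in P\cap H_i$ satisfies $\min_{1\le j\le k}\|p-\tilde{c}_j\|\ge 2^{i-1}r$ by~(\ref{for-frame6}), i.e.\ $2^ir\le 2\min_{1\le j\le k}\|p-\tilde{c}_j\|$, so $\sum^N_{i=1}|P\cap H_i|\,2^ir\le 2\sum_{p\in P_H}\min_{1\le j\le k}\|p-\tilde{c}_j\|$. Because $P_H$ deletes from $P$ the farthest $(1+\frac1\epsilon)z$ points while the inlier set $P'$ of $\tilde{C}$ deletes only the farthest $z$, we have $P_H\subseteq P'$, and therefore this sum is at most $\sum_{p\in P'}\min_{1\le j\le k}\|p-\tilde{c}_j\|=(n-z)\mathcal{K}^{-z}_1(P,\tilde{C})$. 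Collecting the three estimates and absorbing constants into $O(\epsilon)$ gives~(\ref{for-the-sample}).

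I do not expect a real obstacle here; the lemma is essentially bookkeeping on top of Lemma~\ref{lem-sample2}. The only mildly delicate points are (i) that the geometric radii $2^ir$ are, up to the factor $2$, dominated by the true distances on every layer except $H_0$, which lets the aggregate radius telescope into $O(1)\cdot(n-z)\mathcal{K}^{-z}_1(P,\tilde{C})$, and (ii) that the parameter $N$ has been chosen precisely so that the residual innermost radius $r$ is itself already $O(\epsilon)\mathcal{K}^{-z}_1(P,\tilde{C})$, which is what stops the coarse layer $H_0$ from dominating the error. The $k$-means analogue is identical, except that the bound $r\le\epsilon\,\mathcal{K}^{-z}_1(P,\tilde{C})$ is replaced by $r\le\sqrt{\epsilon\,(n-z)\mathcal{K}^{-z}_2(P,\tilde{C})/z}$ and one works with $x_p=\min_{1\le j\le k}\|p-c_j\|^2$ and the corresponding squared-distance versions of~(\ref{for-vbound}) and~(\ref{for-vboundlower}).
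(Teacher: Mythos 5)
Your proposal is correct and follows essentially the same route as the paper: on the event of Lemma~\ref{lem-sample2} you aggregate the per-layer bound (\ref{for-sample22}) weighted by $|P\cap H_i|$, handle $H_0$ via $r\leq \epsilon\,\mathcal{K}^{-z}_1(P,\tilde{C})$ (from (\ref{for-r}) and $N=\lceil\log\frac{1}{\gamma}\rceil$), and bound the outer layers by $2\sum_{p\in P_H\setminus H_0}\min_j\|p-\tilde{c}_j\|\leq 2(n-z)\mathcal{K}^{-z}_1(P,\tilde{C})$, exactly as in the paper's proof. Your explicit remarks that no further union bound over layers is needed and that $P_H\subseteq P'$ justifies the last step are correct and merely make explicit what the paper leaves implicit.
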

\begin{proof}
For convenience, let $Err_i=\big|\frac{1}{|S_{i}|}\sum_{p\in S_{i}}x_p- \frac{1}{|P\cap H_i|}\sum_{p\in P\cap H_i}x_p\big|$ for $0\leq i\leq N$. 
We directly have $Err_i\leq O(\epsilon)(2^{i}r+L)$ from (\ref{for-sample22}). Moreover, the left hand-side of (\ref{for-the-sample}) $=\frac{1}{n-z}\sum^N_{i=0} |P\cap H_i|\cdot Err_i$
\begin{eqnarray}
&\leq&\frac{O(\epsilon)}{n-z}\sum^N_{i=0} |P\cap H_i|\cdot (2^{i}r+L)\nonumber\\
&=&O(\epsilon)\cdot \sum^N_{i=0} \frac{|P\cap H_i|}{n-z}  2^{i}r+O(\epsilon) L. \label{for-the-sample-1}
\end{eqnarray}
It is easy to know that the  term $\sum^N_{i=0} \frac{|P\cap H_i|}{n-z}  2^{i}r$ of (\ref{for-the-sample-1}) is at most $\frac{1}{n-z}(|P\cap H_0|r+2\sum_{p\in P_H\setminus H_0}x_p)\leq r+2\mathcal{K}^{-z}_1(P, \tilde{C})$. Note we set $N=\lceil\log \frac{1}{\gamma}\rceil$ in Algorithm~\ref{alg-core}. Together with (\ref{for-r}), we know $r\leq \epsilon\mathcal{K}^{-z}_1(P, \tilde{C})$ and thus $\sum^N_{i=0} \frac{|P\cap H_i|}{n-z}  2^{i}r\leq O(1)\mathcal{K}^{-z}_1(P, \tilde{C})$. So (\ref{for-the-sample}) is true. 
\end{proof}

Below, we always assume that (\ref{for-the-sample}) is true 
and consider to prove (\ref{for-main}) of Theorem~\ref{the-main}. 
The set $P$ is partitioned into two parts: $P^C_{in}$ and $P^C_{out}$ by $C$, where $P^C_{out}$ is the $z$ farthest points to $C$ ({\em i.e.}, the outliers) and $P^C_{in}=P\setminus P^C_{out}$. Similarly, the coreset  $S$ is also partitioned into two parts $S^C_{in}$ and $S^C_{out}$ by $C$, where $S^C_{out}$ is the set of outliers with total weights $z$. In other words, we need to prove
\begin{eqnarray}
\sum_{p\in S^C_{in}} w(p) x_p\approx \sum_{p\in P^C_{in}}x_p.
\end{eqnarray}
Consider two cases: (\rmnum{1}) $ P_H\setminus P^C_{in}=\emptyset$ and (\rmnum{2}) $P_H\setminus P^C_{in}\neq\emptyset$. Intuitively, the case (\rmnum{1}) indicates that the set $P^C_{in}$ occupies the whole region $\cup^N_{i=0} H_i$; the case (\rmnum{2}) indicates that the region $\cup^N_{i=0} H_i$ contains some outliers from $P^C_{out}$. In the following subsections, we prove that (\ref{for-main}) holds for both cases. 
For ease of presentation, we use $w(U)$ to denote the total weight of a weighted point set $U$ (please be not confused with $|U|$, which is the number of points in $U$).

\vspace{-0.1in}
\subsection{Case (\rmnum{1}): $P_H\setminus P^C_{in}=\emptyset$}
\label{sec-case1}

We prove the following key lemma first. 
\begin{lemma}
\label{cla-main1}
If $P_H\setminus P^C_{in}=\emptyset$, $S^C_{in}=S_H\cup (P^C_{in}\setminus P_H)$ and $S^C_{out}=P^C_{out}$ (recall $S_H=\cup^N_{i=0}S_i$ from Algorithm~\ref{alg-core}).
\end{lemma}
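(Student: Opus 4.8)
The plan is to argue entirely from the geometry of the layers together with elementary set algebra relating the two partitions $P=P_H\cup(P\cap H_{out})$ (a disjoint union, since the $H_i$'s and $H_{out}$ partition $\mathbb{R}^d$) and $P=P^C_{in}\cup P^C_{out}$. The only substantive ingredient is a monotonicity remark about distances to $C$; everything else is bookkeeping.

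First I would rewrite the hypothesis $P_H\setminus P^C_{in}=\emptyset$ as $P_H\subseteq P^C_{in}$, and immediately deduce $P^C_{out}=P\setminus P^C_{in}\subseteq P\setminus P_H=P\cap H_{out}$. Thus the $z$ outliers of $P$ with respect to $C$ all lie in the outermost layer, and since $S=S_H\cup(P\cap H_{out})$ contains all of $P\cap H_{out}$ with unit weights (Step 4 of Algorithm~\ref{alg-core} and the output line), we have $P^C_{out}\subseteq S$ with total weight exactly $z$. Next I would show $P^C_{out}$ is a legitimate choice for the outlier set $S^C_{out}$ of the weighted instance $S$. The key observation: every point of $S$ lying outside $P^C_{out}$ is at least as close to $C$ as every point of $P^C_{out}$. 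Indeed, points of $S_H$ lie in $P_H\subseteq P^C_{in}$, and points of $(P\cap H_{out})\setminus P^C_{out}$ lie in $P\setminus P^C_{out}=P^C_{in}$; in either case they belong to $P^C_{in}$, i.e.\ they are among the $n-z$ nearest points of $P$ to $C$, hence no farther than any point of $P^C_{out}$. Therefore, when we order $S$ by distance to $C$ and peel off farthest points until the accumulated weight reaches $z$, we may take precisely the $z$ unit-weight points of $P^C_{out}$; their weight totals exactly $z$, so $S^C_{out}=P^C_{out}$ (breaking ties at the inlier/outlier threshold consistently with the choice already fixed for $P$, which is permissible because $P^C_{out}\subseteq S$).

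Finally, since $S_H\subseteq P_H$ is disjoint from $P\cap H_{out}$, I would compute $S^C_{in}=S\setminus S^C_{out}=\big(S_H\cup(P\cap H_{out})\big)\setminus P^C_{out}=S_H\cup\big((P\cap H_{out})\setminus P^C_{out}\big)$, and then close with the set identity $(P\cap H_{out})\setminus P^C_{out}=(P\setminus P_H)\setminus P^C_{out}=(P\setminus P^C_{out})\setminus P_H=P^C_{in}\setminus P_H$, which holds unconditionally. This yields $S^C_{in}=S_H\cup(P^C_{in}\setminus P_H)$, as claimed. The one point that genuinely needs care is the weighted-outlier convention in the middle step: we must be sure that the copies of $P^C_{out}$ sitting inside $S$ carry total weight exactly $z$ and that no (possibly heavily weighted) point of $S_H$ is ever selected as an outlier. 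But both facts follow at once from the distance monotonicity above, so I do not expect a real obstacle here — the argument is short.
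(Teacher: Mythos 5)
Your proposal is correct and follows essentially the same route as the paper's proof: the hypothesis gives $P_H\subseteq P^C_{in}$, hence $P^C_{out}$ sits entirely in $P\cap H_{out}$ as unit-weight points of $S$ with total weight $z$, the distance comparison $x_p\leq x_q$ for points of $S$ in $P^C_{in}$ versus $q\in P^C_{out}$ identifies $S^C_{out}=P^C_{out}$, and set algebra yields $S^C_{in}=S_H\cup(P^C_{in}\setminus P_H)$. The only difference is cosmetic (you pin down the outlier set first and complement, while the paper pins down the candidate inlier set and shows its complement in $S$ equals $P^C_{out}$), and your explicit remark on tie-breaking is a minor refinement, not a new idea.
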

\begin{proof}
First, the assumption $P_H\setminus P^C_{in}=\emptyset$ implies 
\begin{eqnarray}
P_H &\subset& P^C_{in};\label{for-cla-main1-2}\\
|P^C_{in}\setminus P_H|&=&|P^C_{in}|-|P_H|.\label{for-cla-main1-1}
\end{eqnarray}
In addition, since $S_H\subset P_H$, we have $S_H\subset P^C_{in}$ from (\ref{for-cla-main1-2}).  
Consequently, the set $S_H\cup (P^C_{in}\setminus P_H)\subset P^C_{in}$. Therefore, for any $p\in S_H\cup (P^C_{in}\setminus P_H)$ and any $q\in P^C_{out}$, $x_p\leq x_q$. Moreover, the set $S\setminus \Big(S_H\cup (P^C_{in}\setminus P_H)\Big)$
\begin{eqnarray}
&=&\Big(S_H\cup (P\setminus P_H)\Big)\setminus \Big(S_H\cup (P^C_{in}\setminus P_H)\Big)\nonumber\\
&=&(P\setminus P_H)\setminus (P^C_{in}\setminus P_H)\nonumber\\
&\underbrace{=}_{\text{by (\ref{for-cla-main1-2})}}& P\setminus P^C_{in}=P^C_{out}. \label{for-cla-main1-3}
\end{eqnarray}
Note $|P^C_{out}|=z$. As a consequence, $S^C_{in}$ should be exactly the set $S_H\cup (P^C_{in}\setminus P_H)$, and $S^C_{out}=P^C_{out}$.
\end{proof}

\begin{lemma}
\label{lem-main1}
If $P_H\setminus P^C_{in}=\emptyset$, (\ref{for-main}) is true.
\end{lemma}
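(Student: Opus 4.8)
The plan is to reduce the claim directly to Lemma~\ref{the-sample}, using the structural description of the inlier/outlier split supplied by Lemma~\ref{cla-main1}. Fix a solution $C\in\tilde{C}\pm L$ and keep the notation $x_p=\min_{1\le j\le k}\|p-c_j\|$. Under the hypothesis $P_H\setminus P^C_{in}=\emptyset$, Lemma~\ref{cla-main1} tells us that $S^C_{in}=S_H\cup(P^C_{in}\setminus P_H)$ and $S^C_{out}=P^C_{out}$; in particular the outlier bookkeeping already agrees ($|P^C_{out}|=z$ and, by the construction, the total weight of $S^C_{out}$ is $z$, so $\mathcal{K}^{-z}_1(S,C)$ is well defined in the weighted sense of Remark~\ref{re-weight}). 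Hence it only remains to compare the two inlier sums.

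First I would write out both objective values over their inlier sets. Since $P\cap H_{out}=P\setminus P_H$ is kept in $S$ with unit weights, and the points of $P^C_{in}\setminus P_H$ are exactly such points, we get $(n-z)\,\mathcal{K}^{-z}_1(S,C)=\sum_{p\in S_H}w(p)\,x_p+\sum_{p\in P^C_{in}\setminus P_H}x_p$, while, using $P_H\subset P^C_{in}$ from (\ref{for-cla-main1-2}), $(n-z)\,\mathcal{K}^{-z}_1(P,C)=\sum_{p\in P_H}x_p+\sum_{p\in P^C_{in}\setminus P_H}x_p$. The two tail sums over $P^C_{in}\setminus P_H$ are identical, so subtracting yields
\[
\big|\mathcal{K}^{-z}_1(S,C)-\mathcal{K}^{-z}_1(P,C)\big|=\frac{1}{n-z}\Big|\sum_{p\in S_H}w(p)\,x_p-\sum_{p\in P_H}x_p\Big|.
\]

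Next I would substitute the weights assigned in Step~4 of Algorithm~\ref{alg-core}, namely $w(p)=|P\cap H_i|/|S_i|$ for $p\in S_i$, which rewrites $\sum_{p\in S_H}w(p)x_p$ as $\sum_{i=0}^{N}\frac{|P\cap H_i|}{|S_i|}\sum_{p\in S_i}x_p$ and $\sum_{p\in P_H}x_p$ as $\sum_{i=0}^{N}\sum_{p\in P\cap H_i}x_p$. The right-hand side above is then exactly the left-hand side of (\ref{for-the-sample}), so Lemma~\ref{the-sample} bounds it by $O(\epsilon)\big(\mathcal{K}^{-z}_1(P,\tilde{C})+L\big)$; absorbing the hidden constant into $\epsilon$ gives precisely (\ref{for-main}).

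As for difficulty: there is essentially no new obstacle in this lemma — all the probabilistic content already resides in Lemma~\ref{the-sample}, and the combinatorial identification of $S^C_{in}$ and $S^C_{out}$ resides in Lemma~\ref{cla-main1}. The only steps requiring care are bookkeeping ones: checking that $P^C_{in}\setminus P_H$ consists solely of unit-weight points of $H_{out}$ so that the two tail sums cancel exactly, and confirming the total outlier weight on the coreset side equals $z$; both are immediate consequences of Lemma~\ref{cla-main1} and the construction in Algorithm~\ref{alg-core}.
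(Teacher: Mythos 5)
Your proposal is correct and follows essentially the same route as the paper: identify $S^C_{in}=S_H\cup(P^C_{in}\setminus P_H)$ and $S^C_{out}=P^C_{out}$ via Lemma~\ref{cla-main1}, cancel the common sum over $P^C_{in}\setminus P_H$, and invoke Lemma~\ref{the-sample} to bound the remaining discrepancy between $\sum_{p\in S_H}w(p)x_p$ and $\sum_{p\in P_H}x_p$. The only cosmetic difference is that you take the absolute difference once, whereas the paper proves the upper bound and notes the lower bound is symmetric.
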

\begin{proof}
Because the set $S^C_{in}$ is equal to $S_H\cup (P^C_{in}\setminus P_H)$ from Lemma~\ref{cla-main1}, the objective value $\mathcal{K}^{-z}_1(S, C)=\frac{1}{n-z}\Big(\sum_{p\in S_H}w(p) x_p+\sum_{p\in P^C_{in}\setminus P_H} x_p\Big)=$
\begin{eqnarray}
\frac{1}{n-z}\Big(\sum^N_{i=0}\frac{|P\cap H_i|}{|S_i|}\sum_{p\in S_i}x_p+\sum_{p\in P^C_{in}\setminus P_H} x_p\Big). \label{for-lem-main1-1}
\end{eqnarray}
From Lemma~\ref{the-sample}, the value of (\ref{for-lem-main1-1}) is no larger than
\begin{eqnarray}
&\leq&\frac{1}{n-z}\Big(\sum_{p\in P_H}x_p+O(\epsilon)(n-z)(\mathcal{K}^{-z}_1(P, \tilde{C})+L)\nonumber\\
&&+\sum_{p\in P^C_{in}\setminus P_H} x_p\Big).\label{for-lem-main1-2}
\end{eqnarray}
Note that $P_H\setminus P^C_{in}=\emptyset$, and thus the sum of the two terms $\sum_{p\in P_H}x_p$ and $\sum_{p\in P^C_{in}\setminus P_H} x_p$ in (\ref{for-lem-main1-2}) is $\sum_{p\in P^C_{in}}x_p$. Therefore, $\mathcal{K}^{-z}_1(S, C)\leq$
\begin{eqnarray}
&&\frac{1}{n-z}\sum_{p\in P^C_{in}}x_p + O(\epsilon) (\mathcal{K}^{-z}_1(P, \tilde{C})+L)\nonumber\\
&=&\mathcal{K}^{-z}_1(P, C) +O(\epsilon) (\mathcal{K}^{-z}_1(P, \tilde{C})+L).
\end{eqnarray}
Similarly, we have $\mathcal{K}^{-z}_1(S, C)\geq \mathcal{K}^{-z}_1(P, C)-O(\epsilon) (\mathcal{K}^{-z}_1(P, \tilde{C})+L)$. Thus, (\ref{for-main}) is true.
 \end{proof}

\subsection{Case (\rmnum{2}): $P_H\setminus P^C_{in}\neq\emptyset$}
\label{sec-case2}
Since $S\setminus P_H=P\setminus P_H$ are the outermost $(1+1/\epsilon)z$ points to $\tilde{C}$, we have the following claim first (due to the space limit, please refer to our supplement for the detailed proof).
\begin{claim}
\label{cla-main2-2}
Either $ S^C_{in}\setminus P_H \subseteq  P^C_{in}\setminus P_H$ or $P^C_{in}\setminus P_H\subseteq S^C_{in}\setminus P_H$ is true.
\end{claim}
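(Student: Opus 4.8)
The plan is to reduce the claim to a one-dimensional comparison of two ``discard the farthest weight-$z$'' cutoffs on a set on which the two instances agree. First I would note that, since $S_H=\cup_{i=0}^N S_i\subseteq P_H$, both sets appearing in the claim in fact lie inside $A:=P\setminus P_H=S\setminus P_H$ (the outermost $(1+\frac{1}{\epsilon})z$ points with respect to $\tilde C$): indeed $S^C_{in}\setminus P_H=S^C_{in}\cap A$ and $P^C_{in}\setminus P_H=P^C_{in}\cap A$. The key structural fact is that on $A$ the instances $P$ and $S$ are identical --- the same points carrying the same (unit) weights --- so the two restricted inlier sets can differ only in how far down the distance ordering each instance must go before its weight-$z$ outlier budget is used up.

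Concretely, for any instance $Q$ and solution $C$ the outlier set $Q^C_{out}$ is obtained by taking points of largest $x_p=\min_{1\le j\le k}\|p-c_j\|$ until the total weight reaches $z$; hence there is a cutoff $\tau_Q$ with $\{p\in Q:x_p>\tau_Q\}\subseteq Q^C_{out}\subseteq\{p\in Q:x_p\ge\tau_Q\}$, so every point strictly beyond $\tau_Q$ is an outlier and every point strictly inside it is an inlier. Let $\tau_P,\tau_S$ be the cutoffs for $P$ and $S$. I would then argue by contradiction: if neither inclusion held, there would be $p\in(S^C_{in}\cap A)\setminus(P^C_{in}\cap A)$ and $q\in(P^C_{in}\cap A)\setminus(S^C_{in}\cap A)$, so $p$ is a $P$-outlier but an $S$-inlier and $q$ is a $P$-inlier but an $S$-outlier. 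From $p\in P^C_{out},\,q\in P^C_{in}$ we get $x_p\ge\tau_P\ge x_q$, and from $q\in S^C_{out},\,p\in S^C_{in}$ we get $x_q\ge\tau_S\ge x_p$; combining these forces $x_p=x_q$ and $\tau_P=\tau_S$, i.e.\ $p$ and $q$ both lie exactly on the common cutoff shell $\{x_p=\tau_P\}$. Resolving ties on that shell by one fixed global ordering of the points (say by index), used to define both $P^C_{out}$ and $S^C_{out}$, the one of $p,q$ with the higher ``outlier priority'' is placed into the outlier set of each instance whenever the other one is --- which contradicts having $p\in P^C_{out},\,q\notin P^C_{out}$ and $q\in S^C_{out},\,p\notin S^C_{out}$ at the same time. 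Hence one of the two inclusions holds. (Without the contradiction framing: if $\tau_P<\tau_S$, every point of $A$ that is even partly an inlier for $P$ has $x_p\le\tau_P<\tau_S$ and so is an inlier for $S$, giving $P^C_{in}\cap A\subseteq S^C_{in}\cap A$; symmetrically when $\tau_S<\tau_P$; and $\tau_P=\tau_S$ is the shell case just handled.)

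The only real obstacle is the degenerate case $\tau_P=\tau_S$: since the integer budget $z$ need not be ``aligned'' with the fractional weights carried by the sampled points of $S_H$, a point of $A$ could a priori be split between $S^C_{in}$ and $S^C_{out}$ when the budget is exhausted inside the cutoff shell. I would remove this by insisting that $P^C_{out}$ and $S^C_{out}$ are built with the same deterministic tie-breaking order, so that shell points are assigned wholly and consistently across both instances; alternatively one may assume general position, so that $\{x_p=\tau_P\}\cap A$ is a single point and the at-most-one-point rounding it forces is absorbed by the $O(\epsilon)$ slack already present in Theorem~\ref{the-main}. Everything else is just the monotonicity of ``keep the closest points up to a weight budget,'' which is immediate.
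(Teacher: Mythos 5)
Your proof takes essentially the same route as the paper's: since $S\setminus P_H=P\setminus P_H$, both restricted inlier sets are sublevel sets of the same distance function $x_p=\min_{1\leq j\leq k}\|p-c_j\|$ on the same (unit-weight) point set, so comparing the two cutoffs ($r_S$ vs.\ $r_P$ in the paper, $\tau_S$ vs.\ $\tau_P$ in yours) immediately yields one of the two inclusions. The only difference is that you explicitly handle ties at the cutoff via a fixed tie-breaking order (or a general-position assumption), a degeneracy the paper's proof silently absorbs by asserting that each inlier set is exactly a distance sublevel set.
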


\begin{lemma}
\label{lem-main2-1}
If $P_H\setminus P^C_{in}\neq\emptyset$, we have $x_p\leq 2^N r+L$ for any $p\in S^C_{in}\cup P^C_{in}\cup P_H$.
\end{lemma}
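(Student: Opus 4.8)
The plan is to verify $x_p\le 2^N r+L$ separately on the three (overlapping) parts $P_H$, $P^C_{in}$ and $S^C_{in}$, handled in that order because the bound on $P^C_{in}$ will feed into the argument for $S^C_{in}$. For $p\in P_H$ there is nothing new: if $p\in P\cap H_i$ with $0\le i\le N$, then inequality~(\ref{for-vbound}) already gives $x_p\le 2^i r+L\le 2^N r+L$.

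For $p\in P^C_{in}$ the idea is to convert the standing hypothesis $P_H\setminus P^C_{in}\neq\emptyset$ into a ``witness'' outlier that necessarily sits at small distance: fix any $q\in P_H\cap P^C_{out}$. Since $q\in P_H$, the first case gives $x_q\le 2^N r+L$; and since $q$ is one of the $z$ farthest points of $P$ to $C$, every inlier obeys $x_p\le x_q$. Hence $x_p\le 2^N r+L$ for all $p\in P^C_{in}$.

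The real work, and the step I expect to be the main obstacle, is the case $p\in S^C_{in}$, since a coreset inlier may lie inside $H_{out}$, where no layer supplies an a~priori distance bound. I plan a proof by contradiction. Write $S=S_H\cup(P\cap H_{out})$ with $S_H\subseteq P_H$; if $p\in S_H$ we are done by the first case, so assume $p\in P\cap H_{out}$ and, for contradiction, $x_p>2^N r+L$. Consider the weighted set $A=\{s\in S:x_s\ge x_p\}$. I will show first that $A\subseteq P\cap H_{out}$ (any $s\in S_H\subseteq P_H=\cup_{i=0}^N(P\cap H_i)$ has $x_s\le 2^N r+L<x_p$ by~(\ref{for-vbound}), so cannot be in $A$), and second that every point of $A$ is an outlier of $P$ with respect to $C$ (if $s\in A\cap P^C_{in}$ then by the second paragraph $x_s\le 2^N r+L<x_p\le x_s$, impossible). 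Therefore $A\subseteq P^C_{out}\cap(P\cap H_{out})$. But $P^C_{out}$ has exactly $z$ points, one of which is the witness $q$ lying in $P_H$, and $P_H$ is disjoint from $H_{out}$ (the regions $H_0,\dots,H_N,H_{out}$ partition $\mathbb{R}^d$); hence $|P^C_{out}\cap(P\cap H_{out})|\le z-1$, and since every point of $P\cap H_{out}$ has weight $1$, the total weight of $A$ is at most $z-1$. On the other hand $p$ is an inlier of $S$, so the whole of $S^C_{out}$ (total weight $z$) has distance $\ge x_p$, that is $S^C_{out}\subseteq A$; and $p$ itself (weight $1$) has $x_p\ge x_p$ while $p\notin S^C_{out}$, so the weight of $A$ is at least $z+1$ --- contradicting the bound $z-1$. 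Hence $x_p\le 2^N r+L$, finishing the lemma.

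I expect the only delicate point to be the passage between weights and cardinalities in this last step; it works precisely because the reweighted sampled points in $S_H$ all sit at distance $\le 2^N r+L$, so the only points of $S$ that can enter $A$ are the genuine unit-weight points of $P\cap H_{out}$. It is worth noting that the argument needs only that at least one outlier of $P$ lands in $P_H$; the larger cushion $|P\cap H_{out}|=(1+\frac{1}{\epsilon})z$ is not required for this lemma. (One could alternatively shortcut the first half of the $S^C_{in}$ case through Claim~\ref{cla-main2-2}, but the direct counting above is self-contained.)
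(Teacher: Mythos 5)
Your proof is correct, and for the only nontrivial part --- the points of $S^C_{in}$ --- it takes a genuinely different route from the paper. The paper splits this part into two subcases according to whether $w(S^C_{in}\cap P_H)$ exceeds $|P^C_{in}\cap P_H|$: in the first subcase it invokes Claim~\ref{cla-main2-2} to get $S^C_{in}\setminus P_H\subseteq P^C_{in}\setminus P_H$, hence $S^C_{in}\subseteq P^C_{in}\cup P_H$; in the second it uses $w(S_H)=|P_H|$ to find a point $p_0\in S_H\setminus S^C_{in}\subseteq S^C_{out}$, which caps all of $S^C_{in}$ by $x_{p_0}\leq 2^N r+L$. You instead run a single contradiction argument: assuming an inlier $p\in S^C_{in}\cap(P\cap H_{out})$ with $x_p>2^N r+L$, the set $A=\{s\in S: x_s\geq x_p\}$ can only contain unit-weight points of $P^C_{out}\cap(P\cap H_{out})$, hence has weight at most $z-1$ (the witness outlier in $P_H\setminus P^C_{in}$ removes one), while $S^C_{out}\subseteq A$ forces weight at least $z$ --- a contradiction. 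This is self-contained (it bypasses Claim~\ref{cla-main2-2} entirely, though the paper reuses that claim later in Lemma~\ref{lem-main2}, so it is not wasted machinery there) and it makes explicit that only one outlier of $P$ landing in $P_H$ is needed, not the full $(1+\frac{1}{\epsilon})z$ cushion. One small caveat: your final count ``at least $z+1$'' tacitly assumes the weighted inlier/outlier split does not cut through $p$; but even in that boundary situation the weight of $A$ is still at least $z$, which already contradicts the bound $z-1$, so the argument stands as written.
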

\begin{proof}
We consider the points in the three parts $P_H$, $P^C_{in}$, and $S^C_{in}$ separately. 

\textbf{(1)} Due to~(\ref{for-vbound}), we have $x_p\leq 2^N r+L$ for any $p\in  P_H$.  

\textbf{(2)} Arbitrarily select one point $p_0$ from $P_H\setminus P^C_{in}$. By (\ref{for-vbound}) again, we have $x_{p_0}\leq 2^N r+L$. Also, because $P_H\setminus P^C_{in}\subset P^C_{out}$, we directly have $x_p\leq x_{p_0}$ for any $p\in P^C_{in}$. Namely, $x_{p}\leq  2^N r+L$ for any $p\in P^C_{in}$. 

\textbf{(3)} Below, we consider the points in $S^C_{in}$. 
If $w(S^C_{in}\cap P_H)> \big|P^C_{in}\cap P_H\big|$, {\em i.e.}, $P_H$ contains more inliers of $S$ than that of $P$, then the outer region $H_{out}$ should contain less inliers of $S$ than that of $P$.   Thus, from Claim~\ref{cla-main2-2}, we have $S^C_{in}\setminus P_H \subseteq  P^C_{in}\setminus P_H$. 
Hence, $S^C_{in}=(S^C_{in}\setminus P_H)\cup (S^C_{in}\cap P_H) \subseteq (P^C_{in}\setminus P_H)\cup P_H=P^C_{in} \cup P_H$. 
From \textbf{(1)} and \textbf{(2)}, we know $x_{p}\leq 2^N r+L$ for any $p\in S^C_{in}$.

Else, $w(S^C_{in}\cap P_H)\leq \big|P^C_{in}\cap P_H\big|$. Then $w(S^C_{in}\cap S_H)\leq \big|P^C_{in}\cap P_H\big|$ since $S^C_{in}\cap P_H=S^C_{in}\cap S_H$. Because $w(S_H)=|P_H|$, we have
\begin{eqnarray}
w(S_H\setminus S^C_{in})\geq |P_H\setminus P^C_{in}|. 
\end{eqnarray}
Also, the assumption $P_H\setminus P^C_{in}\neq \emptyset$ implies $w(S_H\setminus S^C_{in})\geq |P_H\setminus P^C_{in}|>0$, {\em i.e.,} 
\begin{eqnarray}
S_H\setminus S^C_{in}\neq \emptyset.\label{for-lem-main2-1-3}
\end{eqnarray}
Arbitrarily select one point $p_0$ from $S_H\setminus S^C_{in}$. We know $x_{p_0}\leq  2^N r+L$ since $p_0\in S_H\setminus S^C_{in}\subset P_H$. 
Also, for any point $p\in S^C_{in}$, we have $x_p\leq x_{p_0}$ because $p_0\in S_H\setminus S^C_{in}\subset S^C_{out}$. Therefore $x_p\leq  2^N r+L$. 
 \end{proof}

\begin{lemma}
\label{lem-main2}
If $P_H\setminus P^C_{in}\neq\emptyset$, (\ref{for-main}) is true.
\end{lemma}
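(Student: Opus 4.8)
The plan is to reduce (\ref{for-main}) to a bound on how differently the solution $C$ ``trims'' its outliers on $S$ versus on $P$. Writing each objective as total cost minus the cost of its $z$ heaviest (weighted) points, and using $S\setminus P_H=P\setminus P_H$ (these are the outermost $(1+\frac1\epsilon)z$ points of $\tilde C$, which are copied verbatim into $S$) together with Lemma~\ref{the-sample}, I would first obtain
\[
(n-z)\big(\mathcal{K}_1^{-z}(S,C)-\mathcal{K}_1^{-z}(P,C)\big)=\underbrace{\Big(\sum_{p\in S_H}w(p)x_p-\sum_{p\in P_H}x_p\Big)}_{=\,O(\epsilon)(n-z)(\mathcal{K}_1^{-z}(P,\tilde C)+L)}-\Big(\sum_{p\in S^C_{out}}w(p)x_p-\sum_{p\in P^C_{out}}x_p\Big),
\]
so the whole task collapses to bounding the outlier discrepancy $\big|\sum_{p\in S^C_{out}}w(p)x_p-\sum_{p\in P^C_{out}}x_p\big|$.

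Two elementary facts make this possible. First, $2^N r\le\frac\epsilon z(n-z)\mathcal{K}_1^{-z}(P,\tilde C)$, by (\ref{for-r}). Second, $z\le\epsilon(n-z)$, because $|P\cap H_{out}|=(1+\frac1\epsilon)z\le|P|=n$. Multiplying these gives $z\,(2^N r+L)\le\epsilon(n-z)\big(\mathcal{K}_1^{-z}(P,\tilde C)+L\big)$; so any quantity of order $z(2^N r+L)$ already fits inside the target error, and it suffices to show the outlier discrepancy is $O\big(z(2^N r+L)\big)$.

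To show this, split $S^C_{out}=(S^C_{out}\cap S_H)\cup(S^C_{out}\setminus P_H)$ and $P^C_{out}=(P^C_{out}\cap P_H)\cup(P^C_{out}\setminus P_H)$. The two ``inside $P_H$'' pieces are harmless: $|P^C_{out}\cap P_H|\le z$ and $w(S^C_{out}\cap S_H)=w(S_H\setminus S^C_{in})=O(z)$ (the latter from $w(S_H)=|P_H|$ together with Claim~\ref{cla-main2-2}), while by (\ref{for-vbound}) every point of $P_H$ has $x_p\le 2^N r+L$, so each of these pieces contributes only $O\big(z(2^N r+L)\big)$. For the ``outside $P_H$'' pieces, both $S^C_{out}\setminus P_H$ and $P^C_{out}\setminus P_H$ lie in $P\setminus P_H$; by Claim~\ref{cla-main2-2} one contains the other, and their set-difference is exactly the set of points that are outliers on one of the two sides but inliers on the other, hence it lies in $S^C_{in}\cup P^C_{in}$ and has $x_p\le 2^N r+L$ by Lemma~\ref{lem-main2-1}, with cardinality at most $z$. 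So the ``outside'' discrepancy is again $O\big(z(2^N r+L)\big)$. Summing the contributions, combining with the Lemma~\ref{the-sample} term, and dividing by $n-z$ yields (\ref{for-main}) after rescaling $\epsilon$ by a constant; the two branches of the nesting in Claim~\ref{cla-main2-2} are treated symmetrically.

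The main obstacle, and the reason Case~(\rmnum{2}) is genuinely harder than Case~(\rmnum{1}), is that a priori the points on which the outlier sets of $S$ and $P$ disagree could lie arbitrarily far out (up to the data diameter), which would wreck the estimate. The resolution is to feed the two facts above into this disagreement set in tandem: Lemma~\ref{lem-main2-1} caps the distance of each such point by $2^N r+L$ because it is an \emph{inlier} on one of the two sides, and the \emph{number} of such points is only $O(z)$, which by $z\le\epsilon(n-z)$ is small enough to absorb even the additive $L$ term. Verifying that Claim~\ref{cla-main2-2} and Lemma~\ref{lem-main2-1} really perform this bookkeeping, and that $w(S_H\setminus S^C_{in})=O(z)$ in both branches of the case analysis, is where the care lies; the rest is arithmetic.
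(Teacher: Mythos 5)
Your proof is correct, and it takes a somewhat different route from the paper's. You work on the complement: writing each objective as (total cost) minus (cost of the $z$ heaviest points) and exploiting the exact identity $S\setminus P_H=P\setminus P_H$, you reduce everything to the sampling error of Lemma~\ref{the-sample} plus a single ``outlier discrepancy'' term $\big|\sum_{p\in S^C_{out}}w(p)x_p-\sum_{p\in P^C_{out}}x_p\big|$, which you then bound by $O\big(z(2^Nr+L)\big)$ using the same structural ingredients the paper uses (the bound $x_p\le 2^Nr+L$ on $P_H$, Claim~\ref{cla-main2-2} to get the nesting of the outlier sets outside $P_H$, Lemma~\ref{lem-main2-1} to cap the cost of the at most $z$ disagreement points, and $z\le\epsilon(n-z)$ together with (\ref{for-r}) for the final absorption). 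The paper instead argues on the inlier side, proving the upper and lower bounds on $\mathcal{K}_1^{-z}(S,C)$ separately: for the upper bound it splits $S^C_{in}$ into $S^C_{in}\cap S_H$ and $S^C_{in}\setminus S_H$ and uses weight counting ($w(S_H\cap S^C_{in})\ge|P_H|-z$, hence $|S^C_{in}\setminus S_H|\le|P^C_{in}\setminus P_H|+z$), and for the lower bound it introduces $X=S_H\cup(P^C_{in}\setminus P_H)$ and $Y=X\setminus S^C_{in}$ with $w(Y)\le z$. Your decomposition buys symmetry (one identity handles both directions at once) and makes the cancellation of the untouched outer points explicit, at the cost of slightly more bookkeeping about which side's inlier set contains the disagreement points. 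Two small remarks: your justification of $w(S^C_{out}\cap S_H)=O(z)$ via $w(S_H)=|P_H|$ and Claim~\ref{cla-main2-2} is more than needed, since $w(S^C_{out}\cap S_H)\le w(S^C_{out})=z$ directly; and your inequality $z\le\epsilon(n-z)$ presupposes that $r$ with $|P\cap H_{out}|=(1+\frac1\epsilon)z$ exists, i.e.\ $\gamma\le\epsilon$ --- the paper dispatches the remaining case $\gamma\ge\epsilon$ separately by noting the coreset then contains all of $P$, and you should add that one sentence as well.
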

\begin{proof}
We prove the upper  bound of $\mathcal{K}^{-z}_1(S, C)$ first. We analyze the clustering costs of the two parts $S^C_{in}\cap S_H$ and $S^C_{in}\setminus S_H$ separately. 
\begin{eqnarray}
\mathcal{K}^{-z}_1(S, C)=\frac{1}{n-z}\Big(\underbrace{\sum_{p\in S^C_{in}\cap S_H}w(p) x_p}_{\textbf{(a)}}+\underbrace{\sum_{p\in S^C_{in}\setminus S_H}x_p}_{\textbf{(b)}}\Big). \nonumber
\end{eqnarray}
Note the points of $S^C_{in}\setminus S_H$ have unit-weight (since $S^C_{in}\setminus S_H\subseteq P\setminus P_H$ are the points from the outermost $(1+\frac{1}{\epsilon})z$ points of $P$). Obviously, the part \textbf{(a)} is no larger than
\begin{eqnarray}
&&\sum_{p\in  S_H}w(p) x_p=\sum^N_{i=0}\frac{|P\cap H_i|}{|S_i|}\sum_{p\in S_i}x_p\nonumber\\
&\leq&\sum_{p\in P_H}x_p+  O(\epsilon) (n-z)(\mathcal{K}^{-z}_1(P, \tilde{C})+L)\label{for-lem-main2-2-2}
\end{eqnarray}
from Lemma~\ref{the-sample}. The set $P_H$ consists of two parts $P_H\cap P^C_{in}$ and $P_H\setminus P^C_{in}$. From Lemma~\ref{lem-main2-1} and the fact $|P_H\setminus P^C_{in}|\leq |P^C_{out}|= z$, we know $\sum_{p\in P_H\setminus P^C_{in}}x_p\leq z(2^N r+L)$. Thus, the upper bound of the part \textbf{(a)} becomes 
\begin{eqnarray}
&\sum_{p\in P_H\cap P^C_{in}} x_p+  z(2^N r+L)+\nonumber\\
&O(\epsilon) (n-z)(\mathcal{K}^{-z}_1(P, \tilde{C})+L).\label{for-lem-main2-2-3}
\end{eqnarray}

To bound the part \textbf{(b)}, we consider the size $|S^C_{in}\setminus S_H|$. 
Since the total weight of outliers is $z$, $w\big(S_H\cap S^C_{in}\big)$
\begin{eqnarray}
&=&w(S_H)-w\big(S_H\cap S^C_{out}\big)\geq w(S_H)-z\nonumber\\
&=&\big|P_H \big|-z\geq\big|P_H\cap P^C_{in}\big|-z. \label{for-lem-main2-2-4}
\end{eqnarray}
Together with the fact $w\big(S_H\cap S^C_{in}\big)+\big|S^C_{in}\setminus S_H\big|=\big|P_H\cap P^C_{in}\big|+\big|P^C_{in}\setminus P_H\big|=n-z$, we have
\begin{eqnarray}
\big|S^C_{in}\setminus S_H\big|\leq \big|P^C_{in}\setminus P_H\big|+z.
\end{eqnarray}
Therefore $\Big|\big(S^C_{in}\setminus S_H\big)\setminus\big(P^C_{in}\setminus P_H\big)\Big|\leq z$ from Claim~\ref{cla-main2-2}. Through Lemma~\ref{lem-main2-1} again, we know that the part \textbf{(b)} is no larger than $\sum_{p\in P^C_{in}\setminus P_H}x_p+\Big|\big(S^C_{in}\setminus S_H\big)\setminus\big(P^C_{in}\setminus P_H\big)\Big|\cdot(2^N r+L)$
\begin{eqnarray}
\leq\sum_{p\in P^C_{in}\setminus P_H}x_p+z(2^N r+L).\label{for-lem-main2-2-5}
\end{eqnarray}

Putting (\ref{for-lem-main2-2-3}) and (\ref{for-lem-main2-2-5}) together, we have $\mathcal{K}^{-z}_1(S, C)\leq$
\begin{eqnarray}
 &&\mathcal{K}^{-z}_1(P, C)+ O(\epsilon)(\mathcal{K}^{-z}_1(P, \tilde{C})+L)\nonumber\\
&& +\frac{2z}{n-z}(2^N r+L).\label{for-upper}
\end{eqnarray}

Recall $\gamma=z/(n-z)$ in Algorithm~\ref{alg-core}. If $\gamma\geq\epsilon$, the size of our coreset $S$ is at least $(1+1/\epsilon)z\geq n$; that is, $S$ contains all the points of $P$. For the other case $\epsilon>\gamma$, together with (\ref{for-r}), the term $\frac{2z}{n-z}(2^N r+L)$ in (\ref{for-upper}) is at most
\begin{eqnarray}
O(\epsilon)(\mathcal{K}^{-z}_1(P, \tilde{C})+L).
\end{eqnarray}
Overall, $\mathcal{K}^{-z}_1(S, C)\leq  \mathcal{K}^{-z}_1(P, C)+ O(\epsilon)(\mathcal{K}^{-z}_1(P, \tilde{C})+L)$ via (\ref{for-upper}).
So we complete the proof for the upper bound.

Now we consider the lower bound of $\mathcal{K}^{-z}_1(S, C)$. Denote by $X=S_H\cup (P^C_{in}\setminus P_H)$ and $Y=X\setminus S^C_{in}$. Obviously, 
\begin{eqnarray}
\mathcal{K}^{-z}_1(S, C)&\geq&\frac{1}{n-z}\Big(\underbrace{\sum_{p\in X}w(p) x_p}_{\textbf{(c)}} -\underbrace{\sum_{p\in Y}w(p) x_p}_{\textbf{(d)}}\Big).\nonumber
\end{eqnarray}
From Lemma~\ref{the-sample}, the part \textbf{(c)} is at least 
\begin{eqnarray}
\sum_{p\in P_H}x_p-O(\epsilon)(\mathcal{K}^{-z}_1(P, \tilde{C})+L)+\sum_{p\in P^C_{in}\setminus P_H}x_p\nonumber\\
\geq \sum_{p\in P^C_{in}}x_p-O(\epsilon)(\mathcal{K}^{-z}_1(P, \tilde{C})+L). \label{for-lem-main2-2-6}
\end{eqnarray}
Further, since $w(Y)\leq z$ and $Y\subseteq X\subseteq  P^C_{in}\cup P_H$, the part \textbf{(d)} is no larger than 
$z(2^N r+L)$  
from  Lemma~\ref{lem-main2-1}.
Using the similar manner for proving the upper bound, we know that $\mathcal{K}^{-z}_1(S, C)\geq  \mathcal{K}^{-z}_1(P, C)- O(\epsilon)(\mathcal{K}^{-z}_1(P, \tilde{C})+L)$. 
 \end{proof}

\section{Linear Regression with Outliers}
\label{sec-lr}
In this section, we consider the problem of linear regression with outliers. Our algorithm and analysis are for the objective function $\mathcal{LR}^{-z}_1$, and the ideas can be extended to handle the objective function $\mathcal{LR}^{-z}_2$.

\begin{algorithm}[tb]
   \caption{\sc{Layered Sampling for Lin1-Outlier}}
   \label{alg-core-lr}
\begin{algorithmic}
  \STATE {\bfseries Input:} An instance $P\subset\mathbb{R}^d$ of linear regression with $z$ outliers, a solution $\tilde{h}=(\tilde{h}_1, \cdots, \tilde{h}_k)$, and two parameters $\epsilon, \eta\in (0,1)$.
   \STATE
\begin{enumerate}
\item Let $\gamma=z/(n-z)$ and $N=\lceil\log \frac{1}{\gamma}\rceil$. Compute the value $r$ satisfying (\ref{for-frame3}). 
\item As described in (\ref{for-frame8}), (\ref{for-frame9}), and (\ref{for-frame10}), the space is partitioned into $N+2$ layers $H_0, H_1, \cdots, H_N$ and $H_{out}$.
\item Randomly sample $\min\Big\{O(\frac{1}{\epsilon^2}d\log\frac{d}{\epsilon}\log\frac{N}{\eta}), |P\cap H_i|\Big\}$ points, denoted by $S_{i}$, from $P\cap H_i$ for $0\leq i\leq N$.
\item For each point $p\in S_{i}$, set its weight to be $|P\cap H_i|/|S_i|$; let $S_H=\cup^N_{i=0}S_i$.
\end{enumerate}
 
  \STATE {\bfseries Output} $S=S_H\cup (P\cap H_{out})$.
\end{algorithmic}
\end{algorithm}

\begin{theorem}
\label{the-main-lr}
Algorithm~\ref{alg-core-lr} returns a point set $S$ having the size $|S|=\tilde{O}(\frac{1}{\epsilon^2}d)+(1+\frac{1}{\epsilon})z$. Moreover, with probability at least $1-\eta$, for any $L>0$ and any solution $h\in \tilde{h}\pm L$, we have 
\begin{eqnarray}
\mathcal{LR}_{1}^{-z} (S, h) \in\mathcal{LR}_{1}^{-z} (P, h)\pm\epsilon\big(\mathcal{LR}_{1}^{-z} (P, \tilde{h})+L\big).\label{for-main-lr}
\end{eqnarray}
Here, $S$ is a weighted instance of linear regression with outliers, and the total weight of outliers is $z$ (see Remark~\ref{re-weight}).
\end{theorem}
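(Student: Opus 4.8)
The plan is to run the proof of Theorem~\ref{the-main} almost verbatim under the dictionary ``ball $\mathbb{B}(\tilde c_j,2^i r)$'' $\mapsto$ ``slab $\mathbb{S}(\tilde h,2^i r)$'', ``$k$ centers'' $\mapsto$ ``the single hyperplane $\tilde h$'', and ``distance $\min_j\|p-c_j\|$'' $\mapsto$ ``absolute residual $|Res(p,h)|$''. The size bound $|S|=\tilde O(\frac1{\epsilon^2}d)+(1+\frac1\epsilon)z$ is immediate: Step~3 of Algorithm~\ref{alg-core-lr} draws $O(\frac1{\epsilon^2}d\log\frac d\epsilon\log\frac N\eta)$ points from each of the $N+1=O(\log\frac1\gamma)$ inner layers, while $S_{out}=P\cap H_{out}$ contains exactly $(1+\frac1\epsilon)z$ points by (\ref{for-frame3}).

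First I would fix a hyperplane $h\in\tilde h\pm L$, set $x_p=|Res(p,h)|$, and establish the per-layer value bounds. Since after normalization $P\subset\mathcal{R}_D$, definition (\ref{for-range-regression}) gives $|Res(p,h)-Res(p,\tilde h)|\le L$ for every $p\in P$; if $p\in P\cap H_i$ then also $|Res(p,\tilde h)|\le 2^i r$ and (for $i\ge1$) $|Res(p,\tilde h)|>2^{i-1}r$. Together these yield $x_p\le 2^i r+L$ and $x_p\ge\max\{2^{i-1}r-L,0\}$, the exact analogues of (\ref{for-vbound}) and (\ref{for-vboundlower}). A Hoeffding bound on each layer then reproduces Lemma~\ref{lem-sample1}; after discretizing the range (next paragraph) and a union bound over the $N+1$ layers and the net, we get the regression analogue of Lemma~\ref{the-sample}, namely $\frac1{n-z}\big|\sum_{i=0}^N\frac{|P\cap H_i|}{|S_i|}\sum_{p\in S_i}x_p-\sum_{p\in P_H}x_p\big|\le O(\epsilon)\big(\mathcal{LR}^{-z}_1(P,\tilde h)+L\big)$ for every $h\in\tilde h\pm L$. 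As in Lemma~\ref{the-sample}, here one uses $N=\lceil\log\frac1\gamma\rceil$, so $2^N\ge(n-z)/z$, which with (\ref{for-frame4}) gives $r\le\epsilon\,\mathcal{LR}^{-z}_1(P,\tilde h)$ and hence $\sum_i\frac{|P\cap H_i|}{n-z}2^i r\le O(1)\,\mathcal{LR}^{-z}_1(P,\tilde h)$ (using $P_H\subseteq P'$ and $2^i r\le 2|Res(p,\tilde h)|$ for $p\in P\cap H_i$, $i\ge1$).

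The one genuinely new ingredient is the net for $\tilde h\pm L$: unlike the clustering range $\tilde C\pm L$, it is defined implicitly by a uniform bound over the cylinder $\mathcal{R}_D$ rather than as a product of balls, so I must check it is still ``low-dimensional'' and griddable at the right resolution. Writing $h-\tilde h=(v_1,\dots,v_{d-1},v_d)$, the condition $h\in\tilde h\pm L$ reads $|\sum_{j=1}^{d-1}v_j x_j+v_d|\le L$ for all $x\in[0,D]^{d-1}$; evaluating at $x=0$ and $x=De_j$ gives $|v_d|\le L$ and $|v_j|\le 2L/D$. Rounding each $v_j$ to a multiple of $\Theta(\frac{\epsilon L}{dD})$ and $v_d$ to a multiple of $\Theta(\frac{\epsilon L}{d})$ yields a net $\mathcal{G}$ of size $(O(d/\epsilon))^d$, and for the rounded $h'$ one has $|Res(p,h)-Res(p,h')|\le D\sum_j|v_j-v'_j|+|v_d-v'_d|\le\epsilon L$ for every $p\in\mathcal{R}_D$ --- precisely the grid-cell estimate $|x_p-x'_p|\le\epsilon L$ used for terms (a) and (c) of (\ref{for-triangle}). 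Since $\log|\mathcal{G}|=O(d\log\frac d\epsilon)$, replacing $\eta$ by $\eta/(N|\mathcal{G}|)$ inflates the per-layer sample size only to $O(\frac1{\epsilon^2}d\log\frac d\epsilon\log\frac N\eta)$, matching Algorithm~\ref{alg-core-lr}; and the triangle-inequality split of the per-layer error through the nearest net point carries over unchanged (a harmless $O(\epsilon L)$ inflation of the range, from net points just outside $\tilde h\pm L$, is absorbed into the $O(\epsilon)(\cdots+L)$ term). I expect this net construction to be the main obstacle; everything else is translation.

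Finally, the passage from the estimate of the previous paragraph to the full guarantee (\ref{for-main-lr}) is purely combinatorial and transfers line by line. Partition $P=P^h_{in}\cup P^h_{out}$ with $P^h_{out}$ the $z$ largest-residual points of $h$, and $S=S^h_{in}\cup S^h_{out}$ with $w(S^h_{out})=z$, then run the two cases of Section~\ref{sec-construction}: case (\rmnum{1}) $P_H\setminus P^h_{in}=\emptyset$, handled as in Lemmas~\ref{cla-main1}--\ref{lem-main1}; case (\rmnum{2}) $P_H\setminus P^h_{in}\neq\emptyset$, handled as in Claim~\ref{cla-main2-2} and Lemmas~\ref{lem-main2-1}--\ref{lem-main2}. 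The only geometric inputs these arguments use are: points inside a thinner slab have smaller residual to $\tilde h$ (monotonicity); $x_p\le 2^N r+L$ for all $p\in S^h_{in}\cup P^h_{in}\cup P_H$ (i.e.\ (\ref{for-vbound}) with $i=N$); and $|P\cap H_{out}|=(1+\frac1\epsilon)z$ (condition (\ref{for-frame3})), so at most $z$ points can switch in/out status between $P$ and $S$ --- all of which hold verbatim for slabs. The $\mathcal{LR}^{-z}_2$ case uses the same modification indicated for $\mathcal{K}^{-z}_2$: replace (\ref{for-frame4}) by $2^N r\le\sqrt{\tfrac\epsilon z(n-z)\mathcal{LR}^{-z}_2(P,\tilde h)}$.
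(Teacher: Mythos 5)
Your proposal matches the paper's proof essentially step for step: the same per-layer residual bounds $x_p\le 2^i r+L$ and $x_p\ge\max\{2^{i-1}r-L,0\}$, Hoeffding plus a union bound over a discretization of $\tilde h\pm L$ to obtain the analogue of Lemma~\ref{the-sample} (Lemma~\ref{the-sample-lr}), and then the two-case analysis of Sections~\ref{sec-case1} and~\ref{sec-case2} transferred verbatim. Your net --- rounding the coefficient differences after extracting $|v_d|\le L$ and $|v_j|\le 2L/D$ from evaluations at $x=0$ and $x=De_j$ --- is just a reparametrization of the paper's grid on the $d$ vertical segments (Claim~\ref{cla-lr}), giving the same size $(O(d/\epsilon))^{d}$ and the same $\epsilon L$ perturbation bound on residuals, so the argument is correct and essentially identical to the paper's.
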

We still use $P_H$ to denote the set $\cup^N_{i=0}(P\cap H_i)$. 
First, we need to prove that $S_H$ is a good approximation of $P_H$. Given a hyperplane $h$, we define a random variable $x_p=|Res(p, h)|$ for each $p\in P$. If $p\in H_i$ for $0\leq i\leq N$, similar to (\ref{for-vbound}) and (\ref{for-vboundlower}), we have the following bounds for $x_p$: 
$x_p\leq 2^i r+L$; 
      $ x_p\geq\max\{ 2^{i-1}r-L, 0\}$   if $i\geq 1$ and 
        $ x_p\geq 0$   if $i=0$.

Then, we can apply the similar idea of Lemma~\ref{the-sample} to obtain the following lemma, where the only difference is about the discretization on $\tilde{h}\pm L$. Recall that $\tilde{h}$ is defined by the coefficients $\tilde{h}_1, \cdots, \tilde{h}_d$ and the input set $P$ is normalized within the region $\mathcal{R}_D$. We build a grid inside each vertical segment $\overline{l_j u_j}$ for $0\leq j\leq d-1$, where $l_0=(0, \cdots, 0, \tilde{h}_d-L)$, $u_0=(0, \cdots, 0, \tilde{h}_d+L)$, and 
\begin{eqnarray}
 l_j=(0, \cdots, 0,\underbrace{D}_{j-th}, 0, \cdots, 0,  \tilde{h}_j D+\tilde{h}_d-L), \\
u_j=(0, \cdots, 0, \underbrace{D}_{j-th}, 0, \cdots, 0,   \tilde{h}_j D+\tilde{h}_d+L)
 \end{eqnarray}
for $j\neq 0$; the grid length is $\frac{\epsilon}{2d} L$. Denote by $G_j$ the set of grid points inside the segment $\overline{l_j u_j}$. Obviously, $\mathcal{G}=G_0\times G_1\times\cdots\times G_{d-1}$ contains $(\frac{4d}{\epsilon})^{d}$ $d$-tuple points, and each tuple determines a $(d-1)$-dimensional hyperplane in $\tilde{h}\pm L$; moreover, we have the following claim (see the proof in our supplement).

\begin{claim}
\label{cla-lr}
For each $h\in \tilde{h}\pm L$, there exist a hyperplane $h'$ determined by a $d$-tuple points from $\mathcal{G}$, such that $|Res(p, h)-Res(p, h')|\leq \epsilon L$ for any $p\in P$. 
\end{claim}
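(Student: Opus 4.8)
The plan is to exhibit, for a given $h=(h_1,\dots,h_d)\in\tilde h\pm L$, an explicit tuple in $\mathcal{G}$ whose induced hyperplane $h'$ approximates $h$ in residual value uniformly over $\mathcal{R}_D\supseteq P$. First I would use the definition of the solution range~(\ref{for-range-regression}): since $h\in\tilde h\pm L$, we have $|Res(p,\tilde h)-Res(p,h)|\le L$ for every $p\in\mathcal{R}_D$. Evaluating this at the $d$ special points of the cylinder --- namely the base-center point $o=(0,\dots,0)$ (lifted appropriately) and the $d-1$ points that sit ``above'' the unit coordinate directions scaled by $D$, i.e. $De_j$ for $1\le j\le d-1$ --- pins down where the graph of $h$ meets each vertical segment $\overline{l_ju_j}$: concretely, the graph of $h$ over the point $De_j$ has height $h_jD+h_d$, which by the range condition lies in $[\tilde h_jD+\tilde h_d-L,\ \tilde h_jD+\tilde h_d+L]$, exactly the segment $\overline{l_ju_j}$ (and similarly $h_d\in[\tilde h_d-L,\tilde h_d+L]$ for $j=0$). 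So for each $j$ there is a grid point $g_j\in G_j$ on that segment within distance $\frac{\epsilon}{2d}L$ (half the grid length) of the true intersection height.

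Next I would let $h'$ be the unique $(d-1)$-dimensional hyperplane passing through the $d$ points $g_0,g_1,\dots,g_{d-1}$; this is a well-defined element of $\mathcal{G}$, and one checks it lies in $\tilde h\pm L$ because each $g_j$ is on $\overline{l_ju_j}$ (a convexity/interpolation argument: a hyperplane through points each within the $\tilde h^\pm$ sandwich over a spanning set of base points stays sandwiched over the whole base $\mathcal{R}_D$). The core estimate is then to bound $|Res(p,h)-Res(p,h')|$ for arbitrary $p\in\mathcal{R}_D$. Writing $f(x)=Res(p,h)-Res(p,h')$ as an affine function of the first $d-1$ coordinates $x=(x_1,\dots,x_{d-1})\in[0,D]^{d-1}$, I know $|f|\le \frac{\epsilon}{2d}L$ at the $d$ affinely-independent points $0, De_1,\dots,De_{d-1}$ (these are exactly the discrepancies between $h$ and $h'$ at the chosen grid points). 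Any point of $[0,D]^{d-1}$ is an affine combination $x=\sum_{j}\lambda_j(De_j)+\big(1-\sum_j\lambda_j\big)\cdot 0$ with $\lambda_j=x_j/D\in[0,1]$ and $\sum_j\lambda_j\le d-1$, so the affine-combination coefficients are bounded in $\ell_1$ by something like $2d$ in absolute value; hence $|f(x)|\le 2d\cdot\frac{\epsilon}{2d}L=\epsilon L$, which is the desired bound.

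The main obstacle --- really the only subtle point --- is controlling the amplification factor when extrapolating the affine error from the $d$ interpolation nodes to an arbitrary point of the box: the coefficients in the affine combination are not a convex combination (some can exceed $1$ and the ``constant'' coefficient $1-\sum_j\lambda_j$ can be as negative as $-(d-2)$), so the naive triangle inequality loses a factor of order $d$. This is precisely why the grid length was chosen to be $\frac{\epsilon}{2d}L$ rather than $\frac{\epsilon}{\sqrt d}L$ as in the clustering case: the extra $d$ in the denominator exactly cancels the $d$ lost in the extrapolation, yielding the clean $\epsilon L$ bound. I would state the affine-combination bound carefully (tracking that $|1-\sum\lambda_j|+\sum|\lambda_j|\le 2d-3\le 2d$) and otherwise the argument is routine triangle-inequality bookkeeping.
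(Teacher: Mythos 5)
Your proposal is correct and, in essence, identical to the paper's own proof: both select the grid tuple nearest to $h$'s heights at the base points $0, De_1,\dots,De_{d-1}$, get a per-node discrepancy of $\frac{\epsilon}{2d}L$, and propagate it to all of $\mathcal{R}_D$ by a triangle-inequality step whose factor of order $d$ is exactly absorbed by the $\frac{\epsilon}{2d}L$ grid length. The only difference is bookkeeping: the paper first converts the node errors into coefficient bounds $|h'_j-h_j|\le\frac{\epsilon}{dD}L$ and sums over coordinates, whereas you express the affine error function as an affine combination of its node values with $\ell_1$-bounded coefficients; the two computations are equivalent.
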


\begin{lemma}
\label{the-sample-lr}
Let $S_0, S_1, \cdots, S_N$ be the samples obtained in Algorithm~\ref{alg-core-lr}. Then, with probability $1-\eta$,
\begin{eqnarray}
&&\frac{1}{n-z}\big|\sum^N_{i=0}\frac{|P\cap H_i|}{|S_i|}\sum_{p\in S_i}x_p-\sum_{p\in P_H}x_p\big|\nonumber\\
&\leq& O(\epsilon)\Big(\mathcal{LR}^{-z}_1(P, \tilde{h})+L\Big) \label{for-the-sample-lr}
\end{eqnarray}
for any $h\in\tilde{h}\pm L$.
\end{lemma}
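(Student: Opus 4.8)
The plan is to mirror the proof of Lemma~\ref{the-sample} verbatim, replacing the ball-based layering by the slab-based layering and the clustering cost by the regression cost. First I would fix a hyperplane $h\in\tilde{h}\pm L$ and use the two bounds recalled just before the lemma, namely $x_p\le 2^ir+L$ and $x_p\ge\max\{2^{i-1}r-L,0\}$ for $p\in P\cap H_i$ with $i\ge 1$ (and $x_p\ge 0$ when $i=0$). These confine $x_p$ to an interval of width at most $2^ir+2L$, so Hoeffding's inequality gives that a sample $\hat{S}_i$ of size $O(\tfrac{1}{\epsilon^2}\log\tfrac1\eta)$ from $P\cap H_i$ satisfies $\big|\tfrac{1}{|\hat S_i|}\sum_{p\in\hat S_i}x_p-\tfrac{1}{|P\cap H_i|}\sum_{p\in P\cap H_i}x_p\big|\le\epsilon(2^ir+2L)$ with probability $1-\eta$, exactly as in Lemma~\ref{lem-sample1}.

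Next I would discretize the solution range. Using the grid $\mathcal{G}=G_0\times\cdots\times G_{d-1}$ of $(\tfrac{4d}{\epsilon})^d$ $d$-tuples described above (grid length $\tfrac{\epsilon}{2d}L$), I replace $\eta$ by $\tfrac{\eta}{N\cdot|\mathcal{G}|}$ in the sample size, which is what Step~3 of Algorithm~\ref{alg-core-lr} provides, and take a union bound over the $N+1$ layers and the $|\mathcal{G}|$ tuples; this yields, with probability $1-\eta$, the per-layer estimate $\big|\tfrac{1}{|S_i|}\sum_{p\in S_i}x_p-\tfrac{1}{|P\cap H_i|}\sum_{p\in P\cap H_i}x_p\big|\le\epsilon(2^ir+2L)$ simultaneously for every $i$ and every $h'$ determined by a tuple of $\mathcal{G}$. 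To pass from $\mathcal{G}$ to an arbitrary $h\in\tilde{h}\pm L$, I would pick the $h'\in\mathcal{G}$ guaranteed by Claim~\ref{cla-lr}, so that $|x_p-x'_p|=\big||Res(p,h)|-|Res(p,h')|\big|\le|Res(p,h)-Res(p,h')|\le\epsilon L$ for every $p$ (here $x'_p=|Res(p,h')|$), and then run the same three-term triangle-inequality split as in~(\ref{for-triangle}): the middle term is controlled by the discretized estimate, and the two outer terms by $\epsilon L$ each. This gives $Err_i:=\big|\tfrac{1}{|S_i|}\sum_{p\in S_i}x_p-\tfrac{1}{|P\cap H_i|}\sum_{p\in P\cap H_i}x_p\big|\le O(\epsilon)(2^ir+L)$ for all $h\in\tilde h\pm L$, the analogue of~(\ref{for-sample22}).

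Finally I would aggregate over layers exactly as in the proof of Lemma~\ref{the-sample}: the left-hand side of~(\ref{for-the-sample-lr}) equals $\tfrac{1}{n-z}\sum_{i=0}^N|P\cap H_i|\cdot Err_i\le\tfrac{O(\epsilon)}{n-z}\sum_{i=0}^N|P\cap H_i|(2^ir+L)=O(\epsilon)\sum_{i=0}^N\tfrac{|P\cap H_i|}{n-z}2^ir+O(\epsilon)L$. For $p\in P\cap H_i$ with $i\ge 1$ we have $|Res(p,\tilde h)|\ge 2^{i-1}r$, hence $2^ir\le 2|Res(p,\tilde h)|$, so $\sum_{i=0}^N\tfrac{|P\cap H_i|}{n-z}2^ir\le r+\tfrac{2}{n-z}\sum_{p\in P_H\setminus H_0}|Res(p,\tilde h)|\le r+2\,\mathcal{LR}^{-z}_1(P,\tilde h)$, using $P_H\subseteq P'$ where $P'$ is the set of $n-z$ inliers induced by $\tilde h$. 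Since $N=\lceil\log\tfrac1\gamma\rceil$ forces $2^N\ge(n-z)/z$, inequality~(\ref{for-frame4}) yields $r\le\epsilon\,\mathcal{LR}^{-z}_1(P,\tilde h)$, so the bound collapses to $O(\epsilon)\big(\mathcal{LR}^{-z}_1(P,\tilde h)+L\big)$, which is~(\ref{for-the-sample-lr}). The only genuinely new ingredient compared with the clustering case is the discretization geometry — slabs rather than balls — and that is precisely what Claim~\ref{cla-lr} supplies; beyond keeping the union-bound bookkeeping consistent with the sample size in Algorithm~\ref{alg-core-lr}, I do not expect a serious obstacle, since the rest is essentially a line-by-line transcription of the $k$-median argument.
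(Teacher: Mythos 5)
Your proposal is correct and follows essentially the same route the paper intends: the paper itself only sketches this lemma as "apply the idea of Lemma~\ref{the-sample}, with the discretization replaced by the slab grid of Claim~\ref{cla-lr}," and your per-layer Hoeffding bound, union bound over $\mathcal{G}$ with the inflated sample size, reverse-triangle-inequality transfer via Claim~\ref{cla-lr}, and layer-wise aggregation using $P_H\subseteq P'$ and $r\le\epsilon\,\mathcal{LR}^{-z}_1(P,\tilde h)$ from (\ref{for-frame4}) reproduce that argument faithfully.
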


We fix a solution $h\in \tilde{h}\pm L$. Similar to the proof of Theorem~\ref{the-main} in Section~\ref{sec-construction}, we also consider the two parts $P^h_{in}$ and $P^h_{out}$ of $P$ partitioned by $h$, where $P^h_{out}$ is the $z$ farthest points to the hyperplane $h$ ({\em i.e.,} the outliers) and $P^h_{in}=P\setminus P^h_{out}$. Similarly, $S$ is also partitioned into two parts $S^h_{in}$ and $S^h_{out}$ by $h$, where $S^h_{out}$ is the set of outliers with total weights $z$. For case (\rmnum{1}) $P_H\setminus P^h_{in}=\emptyset$ and case (\rmnum{2}) $P_H\setminus P^h_{in}\neq\emptyset$, we can apply almost the identical ideas in Section~\ref{sec-case1} and \ref{sec-case2} respectively to prove (\ref{for-main-lr}).

\section{Experiments}

For both Algorithm 1 and 2, we need to compute an initial solution $\tilde{C}$ or $\tilde{h}$ first. 
For $ k $-median/means clustering, we run the algorithm of Local Search with Outliers from \cite{Gupta:2017:LSM:3067421.3067425} on a small sample of size $O(k)$ to obtain the $ k $ initial centers. We do not directly use the $k$-means~+~+ method \cite{Arthur:2007:KAC:1283383.1283494} to seed the $k$ initial centers because it is sensitive to outliers. 
For linear regression, we  run the standard linear regression algorithm on a small random sample of size $O(z)$ to compute an initial solution.

\paragraph{Different coreset methods}
Each of the following methods returns a weighted set as the coreset, and then we run the alternating minimization algorithm $k$-means{-}{-} \cite{Chawla2013kmeansAU} (or  \cite{ShenS19} for linear regression) on it to obtain the solution. For fairness, we keep the coresets from different methods to have the same coreset size for each instance.
\begin{itemize}
	\item Layered Sampling (\texttt{LaySam}).
	{\em i.e.,} Algorithm 1 and 2 
	proposed in this paper. 
	
	\item  Uniform Sampling (\texttt{UniSam}). The most natural and simple method is to take a sample $S$ uniformly at random from the input data set $P$, where each sampled point has the weight $|P|/|S|$. 
	\item Uniform Sampling + Nearest Neighbor Weight (\texttt{NN} ) \cite{Gupta:2017:LSM:3067421.3067425,DBLP:conf/nips/ChenA018}. Similar to \texttt{UniSam}, we also take a random sample $S$ from the input data set $P$. For each $p\in P$, we assign it to its nearest neighbor in $S$; for each $q\in S$, we set its weight to be the number of points assigned to it.  
	
	\item \texttt{Summary} \cite{DBLP:conf/nips/ChenA018}. It is a method to construct the coreset for $k$-median/means clustering with outliers by successively sampling and removing points from the original data until the number of the remaining points is small enough. 
	
\end{itemize}

The above \texttt{LaySam,UniSam} and \texttt{NN} are also used for linear regression in our experiments. 
We run 10 trials for each case and take the average. All the experimental results were obtained on a Ubuntu server with 2.4GHz E5-2680V4 and 256GB main memory; the algorithms were implemented in Matlab R2018b. 

\paragraph{Performance Measures}
The following measures  will be taken into account in the experiment.
\begin{itemize}
	\item \texttt{$ \ell_1 $-loss}:  $ \mathcal{K}_1^{-z} $ or $ \mathcal{LR}_1^{-z} $.
	\item \texttt{$ \ell_2 $-loss}: $ \mathcal{K}_2^{-z} $ or $ \mathcal{LR}_2^{-z} $.
	\item \texttt{recall/precision}. 
	Let $ O^* $ and $ O $ be the sets of outliers with respect to  the optimal solution and our obtained solution, respectively. \texttt{recall} $=\frac{|O\cap O^*|}{|O^*|} $ and \texttt{precision} $ =\frac{|O\cap O^*|}{|O|}  $. Since $|O^*|=|O|=z$, \texttt{recall} $ = $ \texttt{precision} $=\frac{|O\cap O^*|}{z} $.
	\item \texttt{pre-recall}. It indicates the proportion of $ O^* $ that are included in the coreset. Let $S$ be the coreset and \texttt{pre-recall} $ =\frac{|S\cap O^*|}{|O^*|} $. We pay attention  in particular to this measure, because the outliers could be quite important and may reveal some useful information ({\em e.g.,} anomaly detection). For example, as for clustering, if we do not have any prior knowledge of a given biological data, the outliers could be from an unknown tiny species. Consequently it is more preferable to keep such information when compressing the data. More detailed discussion on the significance of outliers can be found in \cite{beyer2007robust,zimek2012survey,DBLP:conf/swat/Moitra18,DBLP:journals/cacm/GoodfellowMP18}.
\end{itemize}

\paragraph{Datasets}
We consider the following datasets in our experiments. 
\begin{itemize}
	\item \texttt{syncluster} We generate the synthetic data as follows: Firstly we create $ k $ centers with each dimension randomly located in $ [0,100] $. 
	Then we generate the points following standard Gaussian distributions around  the centers. 
	\item \texttt{synregression} Firstly, we randomly set the $d$  coefficients of hyperplane $ h $ in $ [-5,+5] $ and construct $ \mathbf{x_i} $ in $ [0,10]^{d-1} $ by uniform sampling. Then let $ y_i$ be the inner product of $(\mathbf{x_i}, 1)$ and $ h $. Finally we randomly perturb each $ y_i $ by $ \mathcal{N}(0,1)$.
	\item \texttt{3DSpatial} ($n=434874$, $d=4$). This dataset was constructed by adding the elevation information to a 2D road network in North Jutland, Denmark~\cite{3d}.
	\item \texttt{covertype} ($n=581012$, $d=10$). It is a forest cover type dataset from Jock A. Blackard (UCI Machine Learning Repository), and we select its first 10 attributes. 
	\item \texttt{skin} ($n=245057$, $d=3$). The skin dataset is collected by randomly sampling B, G, R values from face images of various age groups and we select its first three dimension~\cite{skin}.
	\item  \texttt{SGEMM} ($n=241600$, $d=15$).  It contains the running times for multiplying two 2048 x 2048 matrices using a GPU OpenCL SGEMM kernel with varying parameters~\cite{SGEMM}.
	\item \texttt{PM2.5} ($n=41757$, $d=11$). It is a data set containing the PM2.5 data of US Embassy in Beijing~\cite{pm2.5}.
\end{itemize}
For each dataset, we randomly pick $ z $ points to be outliers by perturbing their locations in each dimension. 
We use a parameter $ \sigma $ to measure the extent of perturbation. For example, we consider the Gaussian distribution $ \mathcal{N}(0,\sigma) $ and uniform distribution $ [-\sigma,+\sigma] $. So the larger the parameter $ \sigma $ is, the more diffused the outliers will be. For simplicity, we use the notations in the form of \texttt{[dataset]-[distribution]-$ \sigma $} to indicate the datasets, {\em e.g.}, \texttt{syncluster-gauss-$\sigma$}.

\subsection{Coreset Construction Time}
We fix the coreset size and vary the data size $n$  of the synthetic datasets, and show the coreset construction times in Figure~\ref{expr-time} and \ref{expr-time2}. 
It is easy to see that the construction time of \texttt{NN}  is larger than other construction times by several orders of magnitude. 
 It is not out of expectation that \texttt{UniSam} is the fastest (because it does not need any operation except uniform random sampling). Our  \texttt{LaySam} lies in between \texttt{UniSam}  and \texttt{Summary}. 
\begin{figure}[H]
	\subfigure[]{\includegraphics[width=0.48\linewidth]{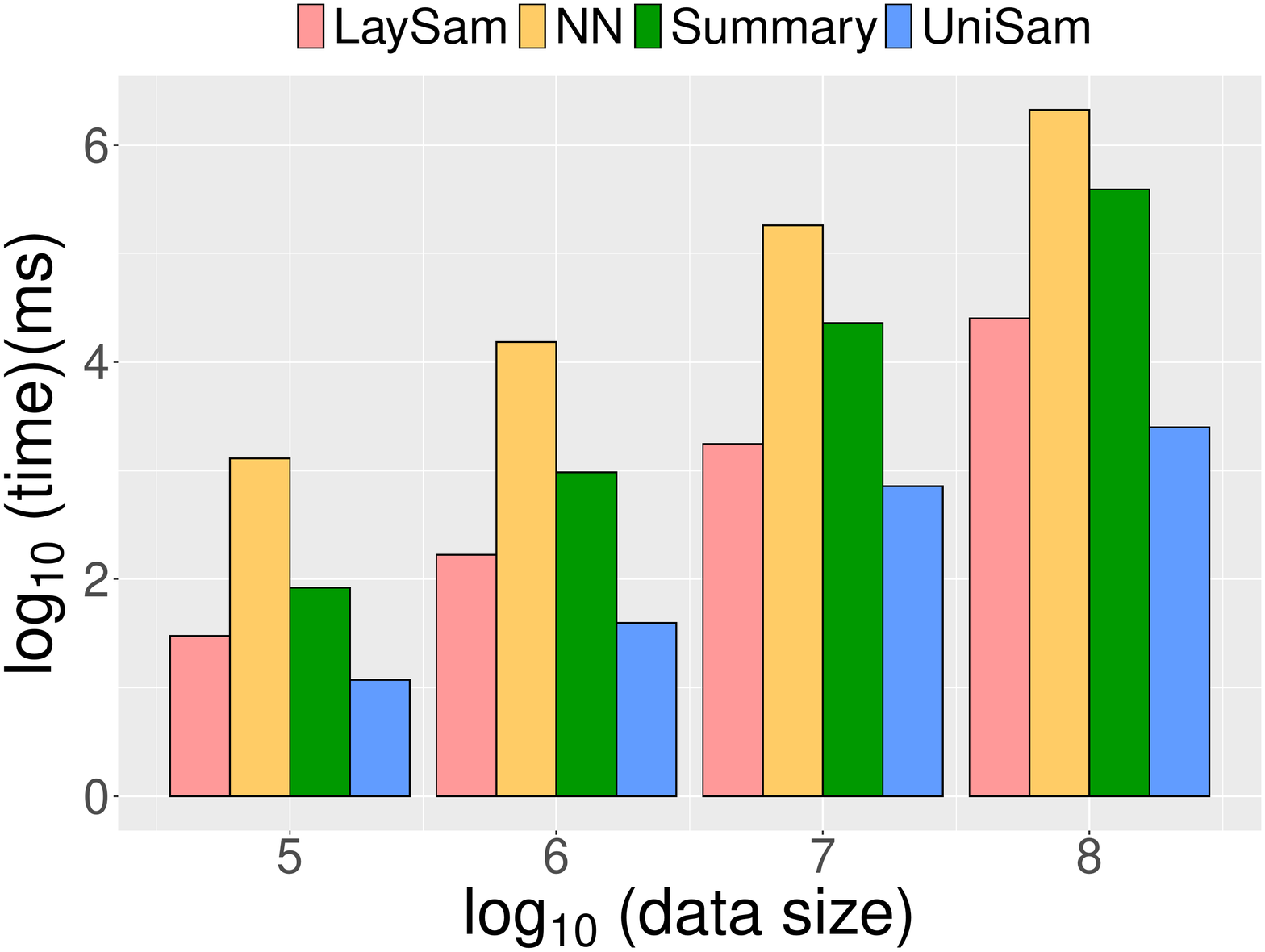}\label{expr-time}}
	\subfigure[]{\includegraphics[width=0.48\linewidth]{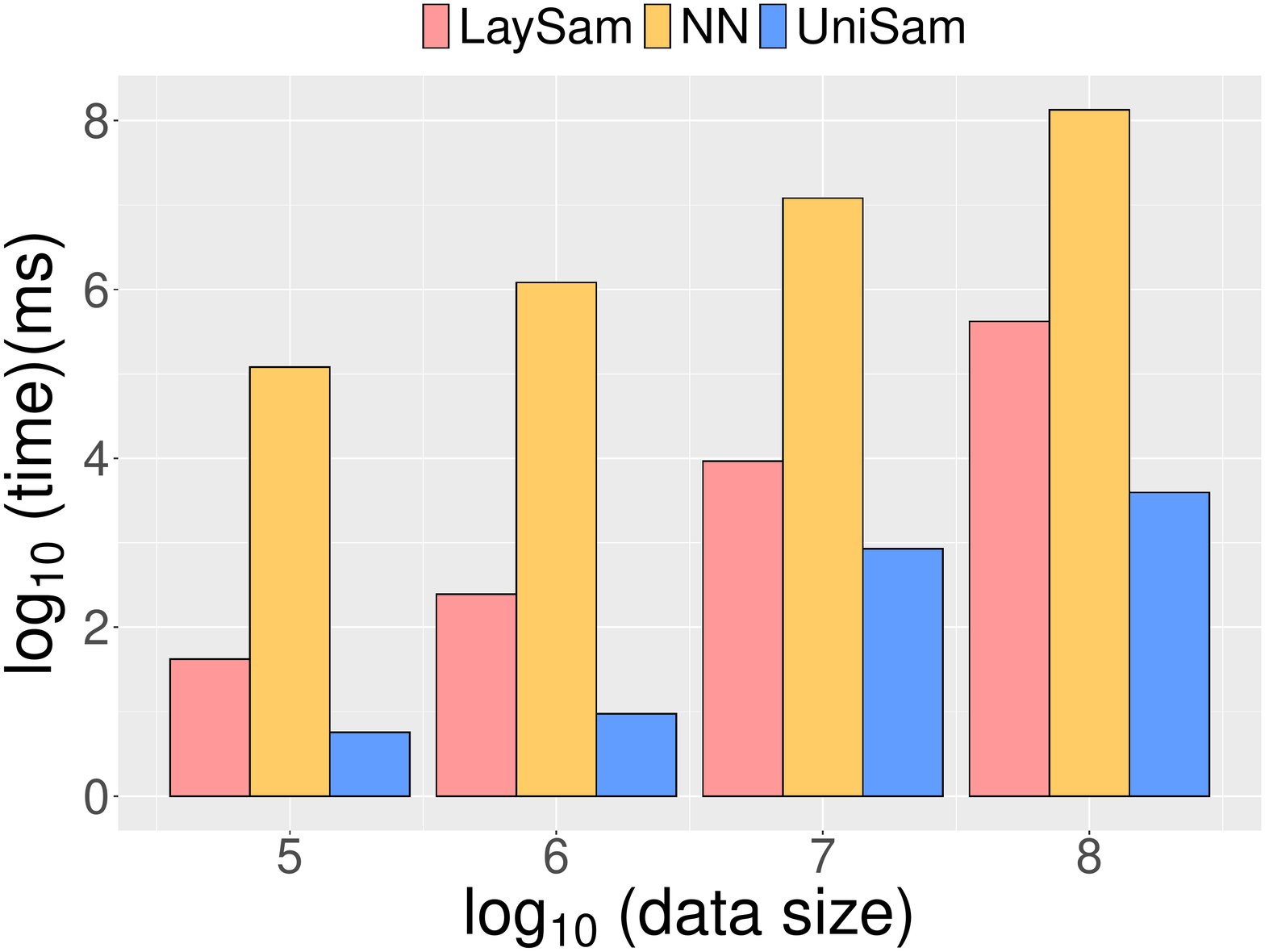}\label{expr-time2}}
		\caption{Coreset construction time {\em w.r.t.} data size. (a) Clustering: coreset size=$ 10^4 $, $ d=5 $ and $ k=10 $; (b) Linear Regression: coreset size=$ 10^4 $ and $ d=20 $.}
\end{figure}

\begin{figure}[H]
	\subfigure[]{\includegraphics[width=0.48\linewidth]{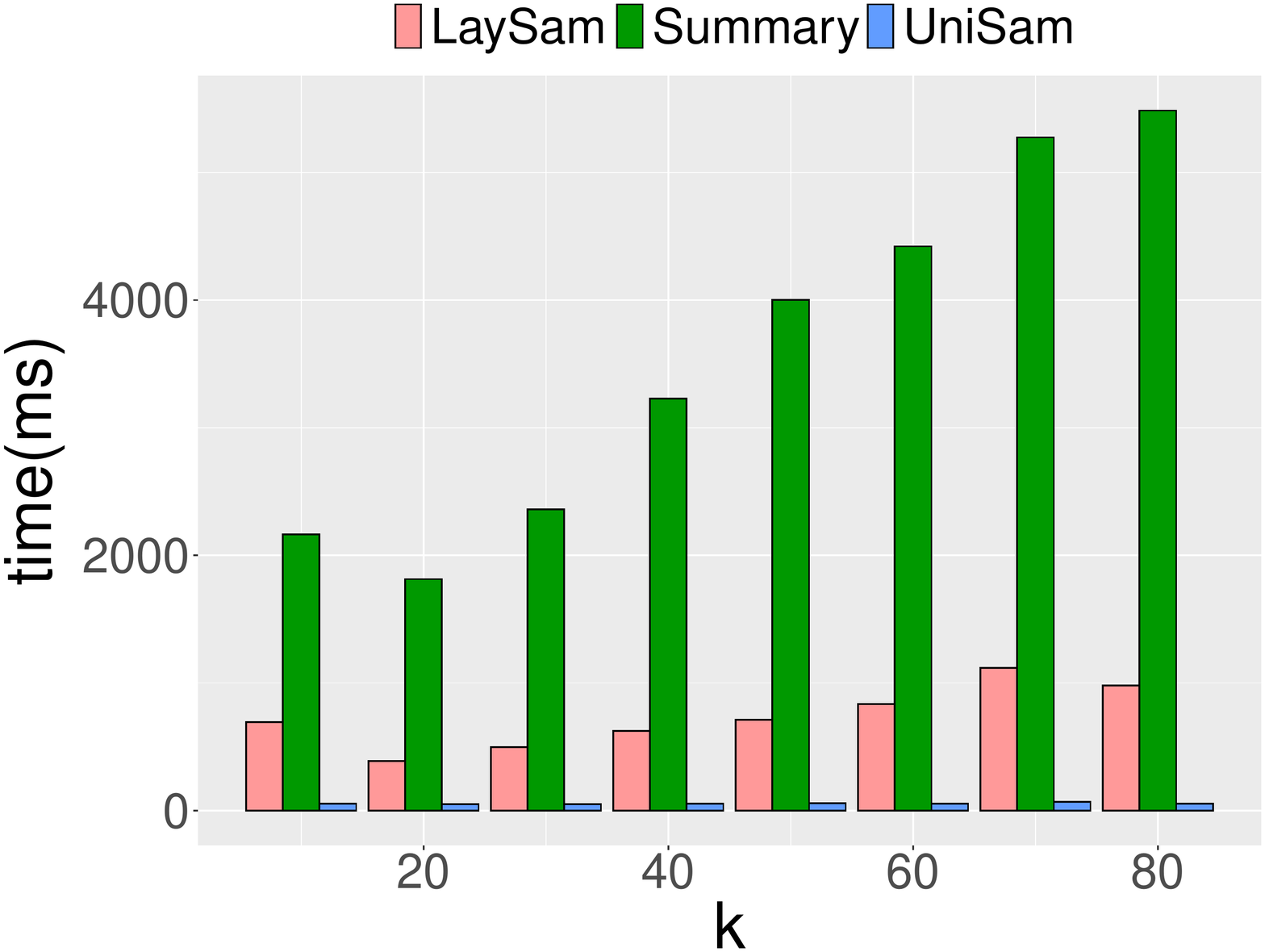}\label{expr-time-22}}	
		\subfigure[]{\includegraphics[width=0.48\linewidth]{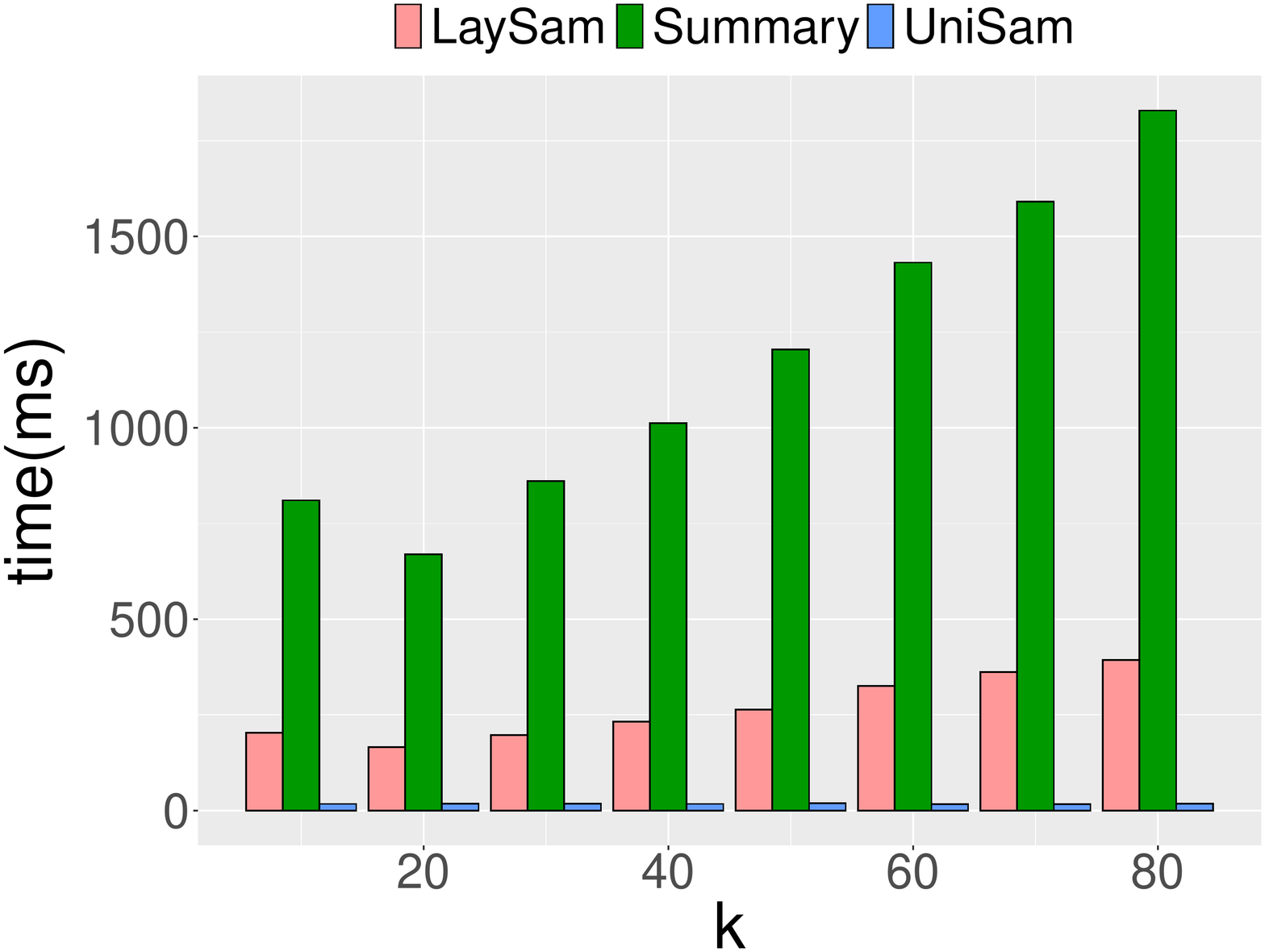}\label{expr-time-2}}
	\caption{Construction time {\em w.r.t.} $ k $. (a) \texttt{syncluster} with data size $n= 10^6 $, $ d=20 $; (b) \texttt{covertype}. }
\end{figure} 
We also study the influence of $k$ (for clustering) on the construction time by testing the synthetic datasets and the real-world dataset \texttt{covertype} (see Figure \ref{expr-time-22} and \ref{expr-time-2}).
\begin{figure*}
\begin{center}
	\subfigure[\texttt{syncluster-gauss-$ \sigma $}, with $ d=20 $, $ k=10$ ,$z=2\cdot10^4$ and coreset size$ =5\cdot 10^4 $.]{\includegraphics[width=0.3\linewidth]{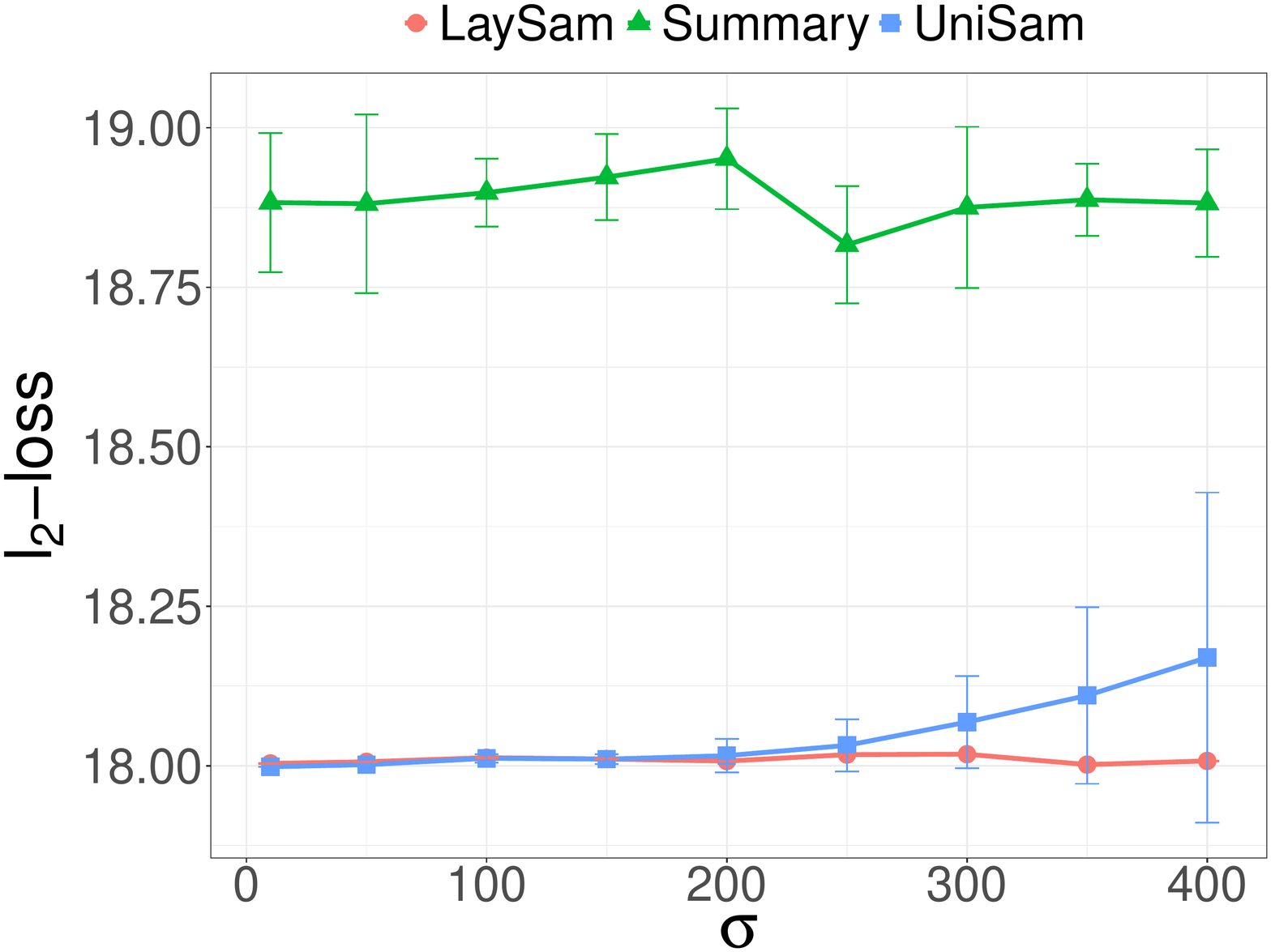}\label{error-cluster}}
		\hspace{0.18in}
	\subfigure[\texttt{3DSpatial-gauss-$ \sigma $}, with $ k=5$, $z=4000$ and coreset size$ =10^4 $.]{\includegraphics[width=0.3\linewidth]{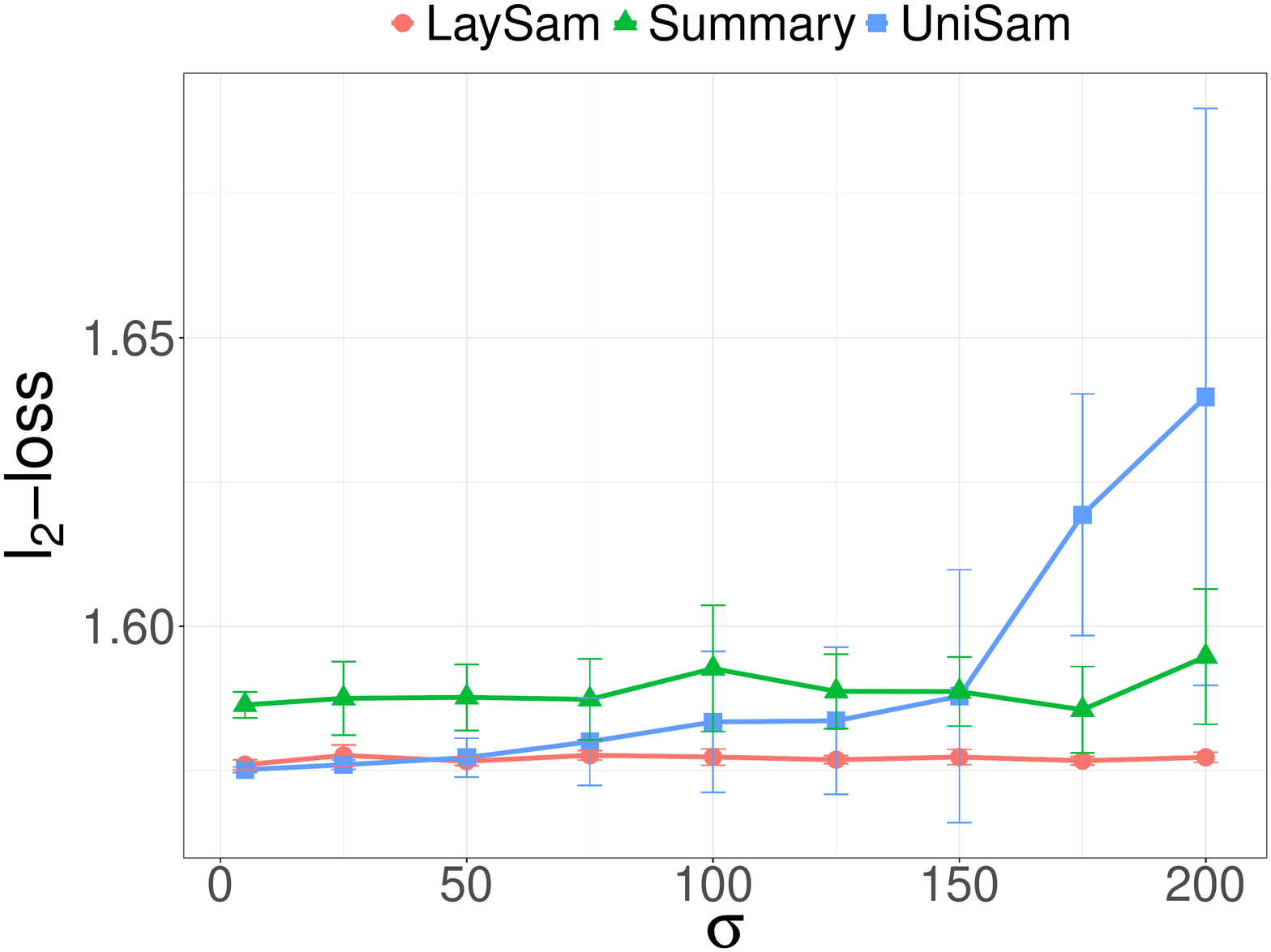}\label{error-cluster2}}
		\hspace{0.18in}
	\subfigure[\texttt{covertype-gauss-$ \sigma $}, with $ k=20 $, $z=10^4$ and coreset size$ =3\cdot10^4 $.]{\includegraphics[width=0.3\linewidth]{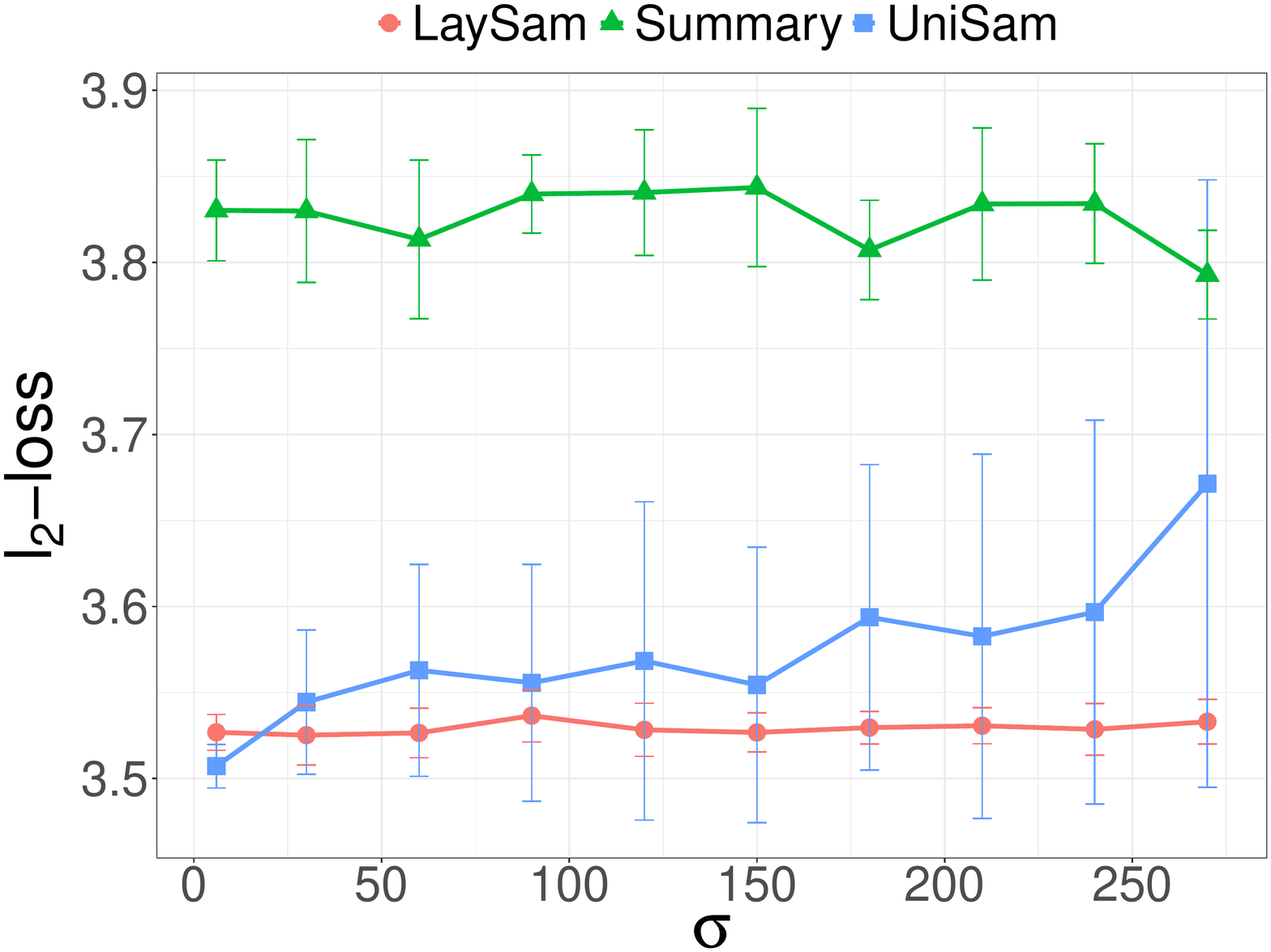}\label{error-cluster3}}
	\subfigure[\texttt{covertype-uniform-$ \sigma $}, with $ k=20 $, $z=10^4$ and coreset size$ =3\cdot10^4 $.]{\includegraphics[width=0.3\linewidth]{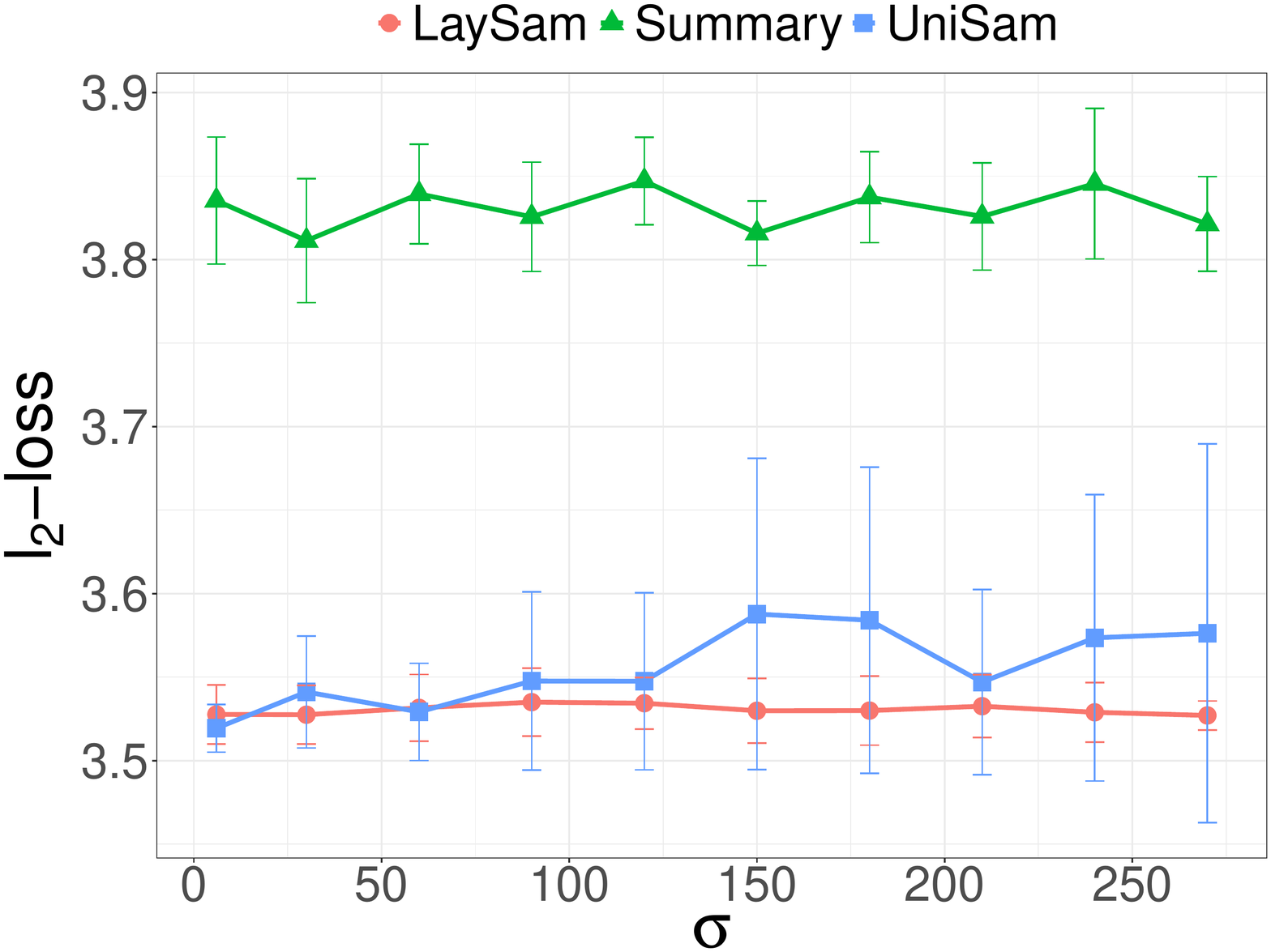}\label{error-cluster4}}
	\hspace{0.25in}
	\subfigure[\texttt{skin-uniform-$ \sigma $}, with $ k=40 $, $z=3000$ and coreset size$ =10^4 $.]{\includegraphics[width=0.3\linewidth]{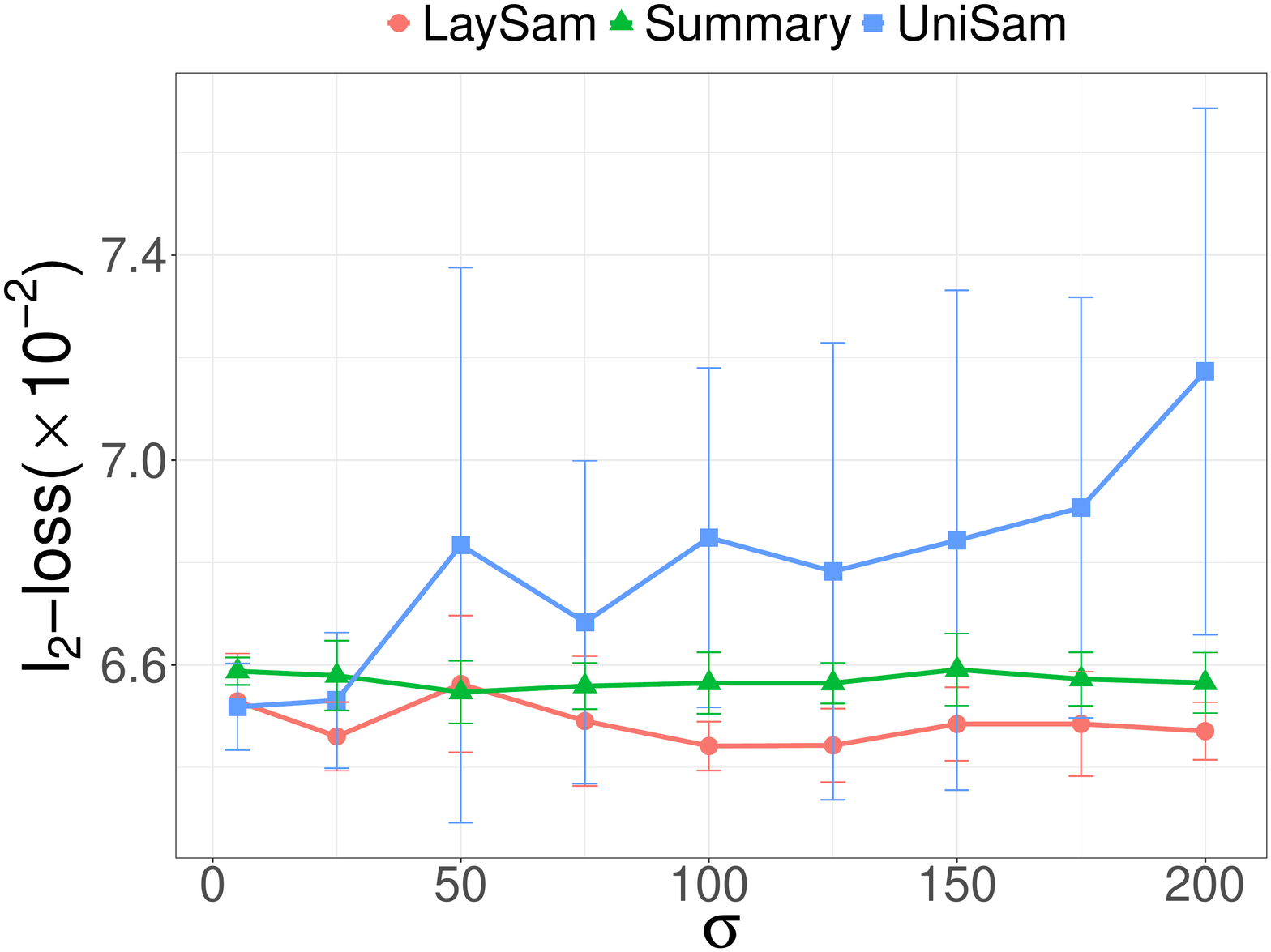}\label{error-cluster5}}
	\caption{Clustering: \texttt{$ \ell_2 $-loss} and stability {\em w.r.t.} $ \sigma $.}
	\end{center}
\end{figure*}

\begin{figure*}
\begin{center}
	\subfigure[\texttt{synregression-gauss-$ \sigma $}, with $ d=20 $, $ z=10^4 $ and coreset size=$ 2\cdot10^4 $.]{\includegraphics[width=0.3\linewidth]{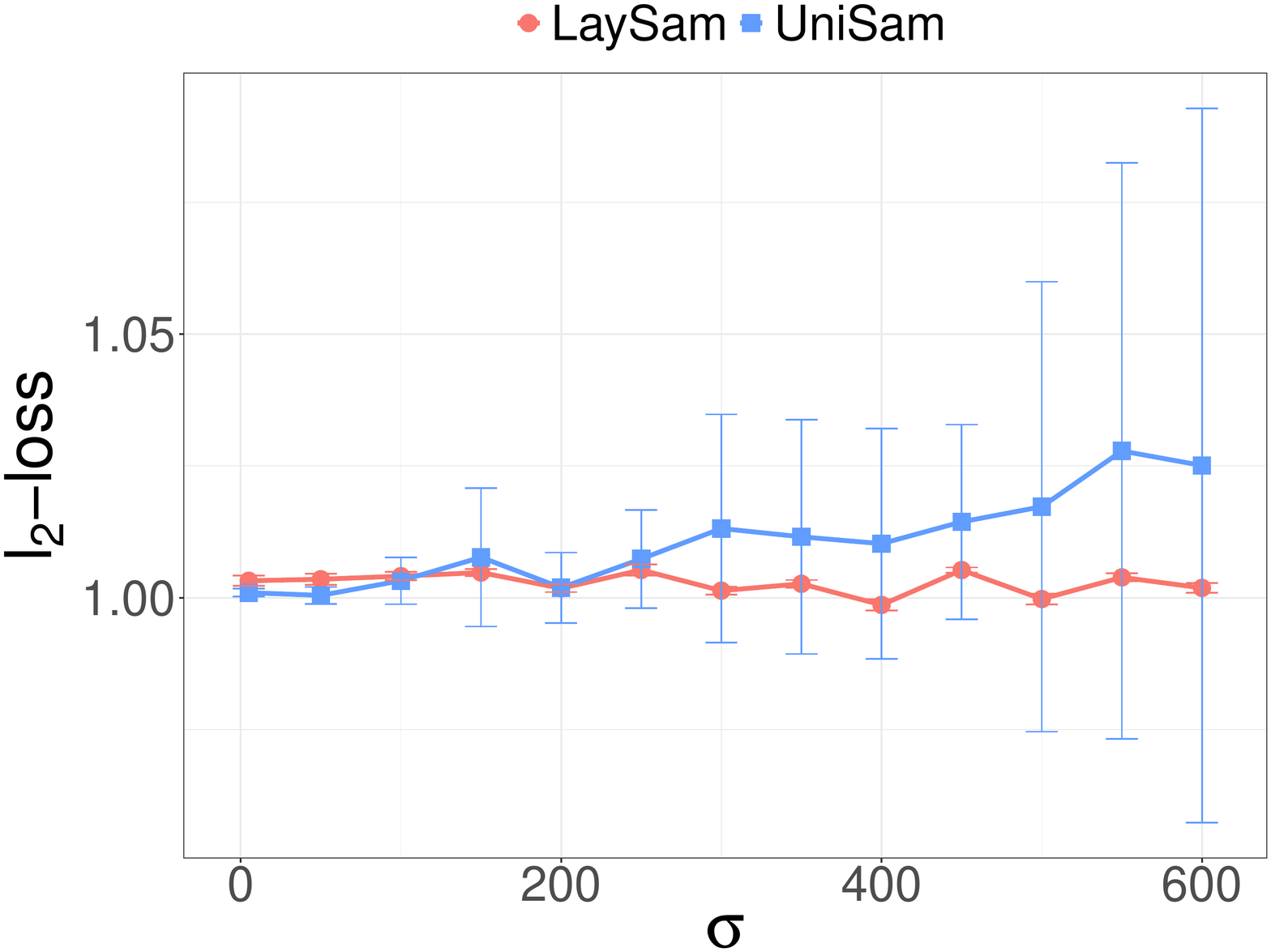}\label{error-lr}}
			\hspace{0.18in}
	\subfigure[\texttt{SGEMM-gauss-$ \sigma $}, with $ z=4000 $ and coreset size=$ 10^4 $.]{\includegraphics[width=0.3\linewidth]{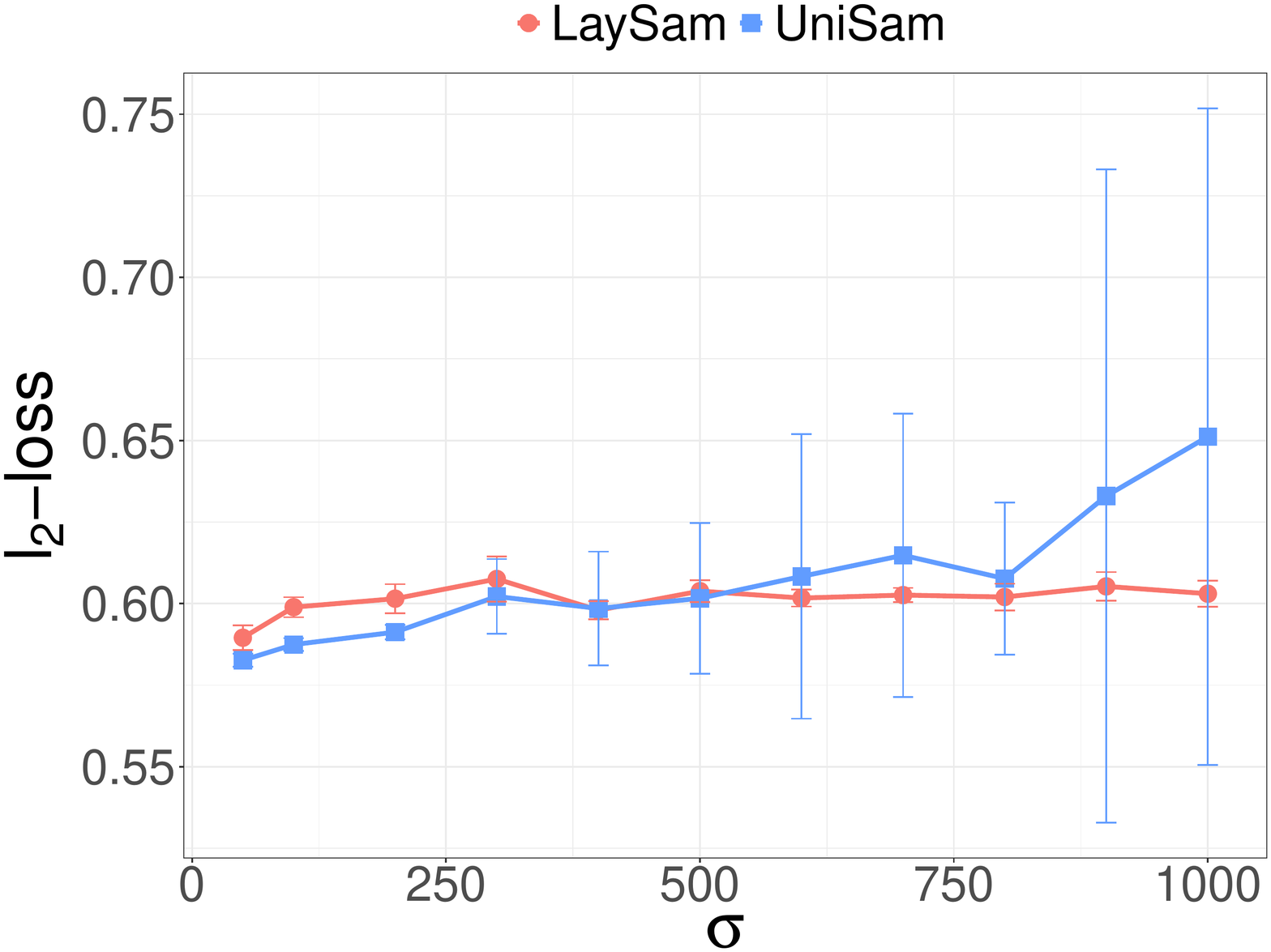}\label{error-lr2}}
			\hspace{0.18in}
	\subfigure[\texttt{PM2.5-uniform-$ \sigma $}, with $ z=1000 $ and coreset size=$ 3000 $.]{\includegraphics[width=0.3\linewidth]{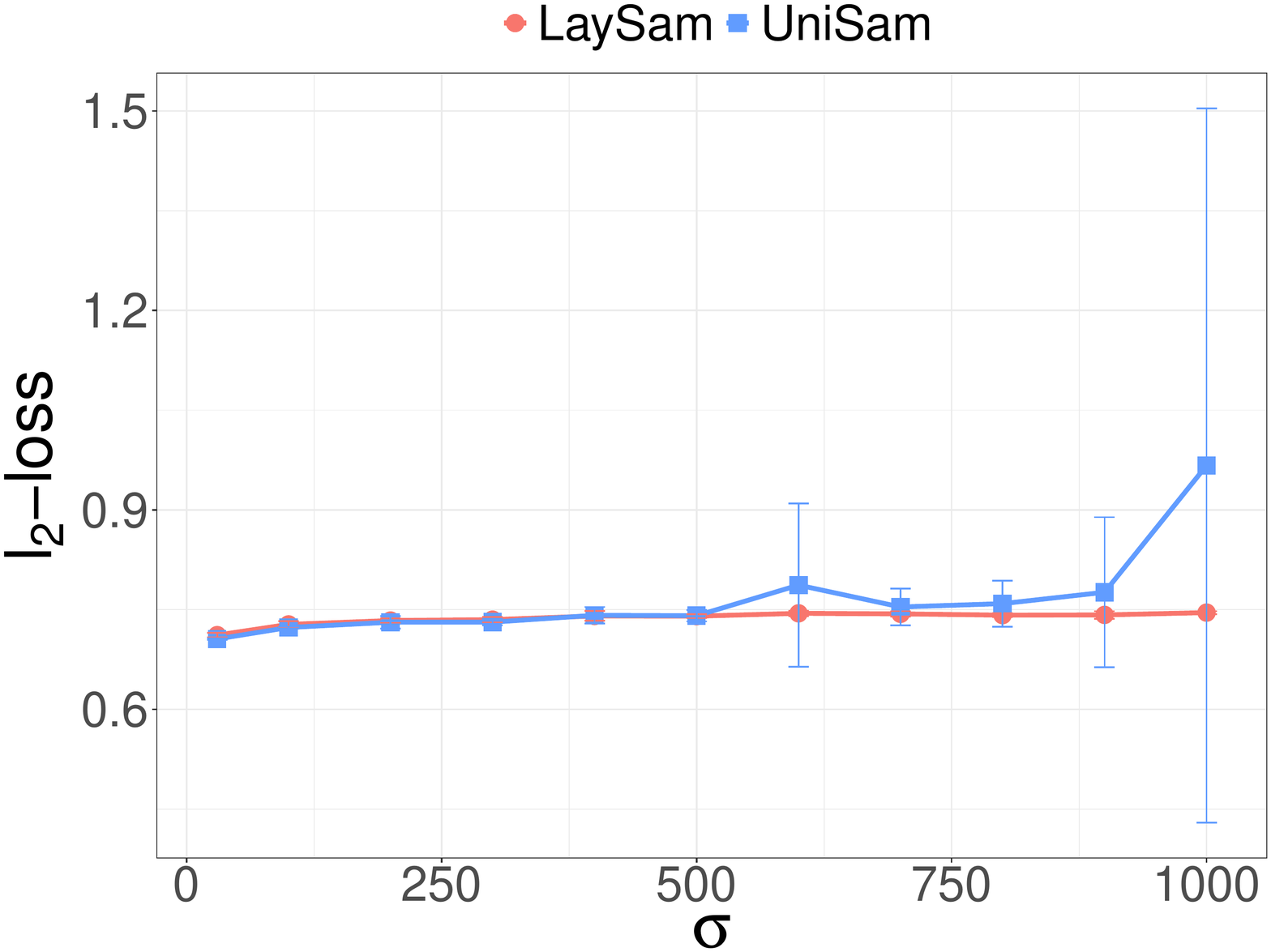}\label{error-lr3}}
	\caption{Linear Regrssion: \texttt{$ \ell_2 $-loss} and stability {\em w.r.t.} $ \sigma $.}
		\end{center}
\end{figure*}

\begin{table*}
\centering
	\subfigure[\texttt{syncluster-gauss-$ \sigma $}, with $ d=20 $, $ k=10$ ,$z=2\cdot10^4$ and coreset size$ =5\cdot 10^4 $.]{
	\begin{tabular}{|l|c|c|c|c|c|c|c|c|c|}
		\hline
		$\sigma$ & \multicolumn{3}{c|}{20} & \multicolumn{3}{c|}{100} & \multicolumn{3}{c|}{200} \\ \hline
		Coreset & LaySam & UniSam & Summary & LaySam & UniSam & Summary & LaySam & UniSam & Summary \\ \hline
		\texttt{$ \ell_1 $-loss} & 3.976 & 3.976 & 4.073 & 3.976 & 3.976 & 4.067 & 3.977 & 4.048 & 4.074 \\ \hline
		\texttt{$ \ell_2 $-loss} & 18.01 & 18 & 18.9 & 18.01 & 18.01 & 18.84 & 18.01 & 18.14 & 18.91 \\ \hline
		\texttt{Prec} & 1 & 1 & 1 & 1 & 1 & 1 & 1 & 1 & 1 \\ \hline
		\texttt{Pre-Rec} & 1 & 0.0506 & 1 & 1 & 0.0499 & 1 & 1 & 0.0503 & 1 \\ \hline
	\end{tabular}
	}
	\subfigure[\texttt{covertype-gauss-$ \sigma $}, with $ k=20 $, $z=10^4$ and coreset size$ =3\cdot10^4 $.]{
	\begin{tabular}{|l|c|c|c|c|c|c|c|c|c|}
		\hline
		$\sigma$ & \multicolumn{3}{c|}{20} & \multicolumn{3}{c|}{100} & \multicolumn{3}{c|}{200} \\ \hline
		Coreset & LaySam & UniSam & Summary & LaySam & UniSam & Summary & LaySam & UniSam & Summary \\ \hline
		\texttt{$ \ell_1 $-loss} & 1.781 & 1.781 & 1.868 & 1.779 & 1.785 & 1.871 & 1.786 & 1.799 & 1.853 \\ \hline
		\texttt{$ \ell_2 $-loss} & 3.501 & 3.501 & 3.818 & 3.505 & 3.523 & 3.874 & 3.515 & 3.576 & 3.767 \\ \hline
		\texttt{Prec} & 1 & 1 & 1 & 1 & 1 & 1 & 1 & 1 & 1 \\ \hline
		\texttt{Pre-Rec} & 1 & 0.0524 & 1 & 1 & 0.0511 & 1 & 1 & 0.0547 & 1 \\ \hline
	\end{tabular}}
	\subfigure[\texttt{skin-uniform-$ \sigma $}, with $ k=40 $, $z=3000$ and coreset size$ =10^4 $.]{
		\begin{tabular}{|l|c|c|c|c|c|c|c|c|c|}
			\hline
			$\sigma$ & \multicolumn{3}{c|}{20} & \multicolumn{3}{c|}{100} & \multicolumn{3}{c|}{200} \\ \hline
			Coreset & LaySam & UniSam & Summary & LaySam & UniSam & Summary & LaySam & UniSam & Summary \\ \hline
			\texttt{$ \ell_1 $-loss} & 0.1906 & 0.1906 & 0.1925 & 0.1883 & 0.1902 & 0.1914 & 0.1954 & 0.2032 & 0.1957 \\ \hline
			\texttt{$ \ell_2 $-loss}($\times 10^{-2}$) & 6.46 & 6.531 & 6.579 & 6.449 & 6.849 & 6.565 & 6.471 & 7.173 & 6.565 \\ \hline
			\texttt{Prec} & 0.9993 & 0.9995 & 0.9993 & 1 & 1 & 1 & 1 & 1 & 1 \\ \hline
			\texttt{Pre-Rec} & 0.9997 & 0.0403 & 1 & 1 & 0.0403 & 1 & 1 & 0.0437 & 1 \\ \hline
		\end{tabular}
	}
\caption{Performance on clustering with outliers.}
\label{tab-1}
\end{table*}

\begin{table*}
\centering
\subfigure[\texttt{synregression-gauss-$ \sigma $}, with $ d=20 $, $ z=10^4 $ and coreset size=$ 2\cdot10^4 $.]{
	\begin{tabular}{|l|c|c|c|c|c|c|}
		\hline
		$\sigma$ & \multicolumn{2}{c|}{20} & \multicolumn{2}{c|}{300} & \multicolumn{2}{c|}{600} \\ \hline
		Coreset & LaySam & UniSam & LaySam & UniSam & LaySam & UniSam \\ \hline
		\texttt{$ \ell_1 $-loss} & 0.7996 & 0.7982 & 0.7989 & 0.8007 & 0.7987 & 0.7998 \\ \hline
		\texttt{$ \ell_2 $-loss} & 1.003 & 0.9993 & 1.001 & 1.01 & 1.002 & 1.025 \\ \hline
		\texttt{Prec} & 0.9305 & 0.931 & 0.9953 & 0.9953 & 0.9975 & 0.9978 \\ \hline
		\texttt{Pre-Rec} & 0.9535 & 0.0108 & 0.9968 & 0.0093 & 0.9975 & 0.0098 \\ \hline
	\end{tabular}
	}
\subfigure[\texttt{PM2.5-uniform-$ \sigma $}, with $ z=1000 $ and coreset size=$ 3000 $.]{
\begin{tabular}{|l|c|c|c|c|c|c|}
	\hline
	$\sigma$ & \multicolumn{2}{c|}{20} & \multicolumn{2}{c|}{500} & \multicolumn{2}{c|}{1000} \\ \hline
	Coreset & LaySam & UniSam & LaySam & UniSam & LaySam & UniSam \\ \hline
	\texttt{$ \ell_1 $-loss} & 0.6292 & 0.6244 & 0.637 & 0.6368 & 0.6349 & 0.6791 \\ \hline
	\texttt{$ \ell_2 $-loss} & 0.7179 & 0.7075 & 0.744 & 0.7473 & 0.7456 & 0.8819 \\ \hline
	\texttt{Prec} & 0.9188 & 0.921 & 0.99 & 0.9901 & 0.9918 & 0.9917 \\ \hline
	\texttt{Pre-Rec} & 0.9714 & 0.0732 & 0.999 & 0.0725 & 1 & 0.0734 \\ \hline
\end{tabular}
}
\caption{Performance on linear regression with outliers.}
\label{tab-2}
\end{table*}

\subsection{Performance}

Figure~\ref{error-cluster}-\ref{error-cluster5} show the performances of clustering on \texttt{$ \ell_2 $-loss} with different $ \sigma $s. The results on \texttt{$ \ell_1 $-loss} are very similar  (more results are summarized in Table \ref{tab-1} and \ref{tab-2}). We can see that \texttt{LaySam} outperforms  the other methods in terms of both synthetic and real-world datasets. Moreover, its performance remains quite stable (with small standard deviation) and is also robust when $ \sigma $ increases. \texttt{UniSam} works well when $ \sigma $ is small, but it becomes very instable when $ \sigma $ rises to large. 
Both \texttt{LaySam} and \texttt{UniSam} outperform \texttt{Summary} on most datasets. 

Similar comparison of the performance for linear regression are shown in Figure~\ref{error-lr}-\ref{error-lr3}.

The three coreset methods achieve very close values of \texttt{recall} and \texttt{precision}.  But \texttt{UniSam} has much lower \texttt{pre-recall} than those of \texttt{Summary} and \texttt{LaySam}.

\section{Conclusion}
To reduce the time complexities of existing algorithms for clustering and linear regression with outliers, we propose a new variant of coreset method which can guarantee the quality for any solution in a local range surrounding the given initial solution. 
In future, it is worth considering to apply our framework to a broader range of robust optimization problems, such as logistic regression with outliers and Gaussian mixture model with outliers.

\newpage

\bibliography{local}
\bibliographystyle{icml2020}

\section{Proof of Claim 1}

Since $S^C_{in}$ is the set of inliers to $C$, there must exist some value $r_S>0$ such that 
\begin{eqnarray}
S^C_{in}=\{p\mid p\in S, \min_{1\leq j\leq k}||p-c_j||\leq r_S\}. 
\end{eqnarray} 
And therefore
\begin{eqnarray}
S^C_{in}\setminus P_H=\{p\mid p\in S\setminus P_H, \min_{1\leq j\leq k}||p-c_j||\leq r_S\}. \label{for-c1-1}
\end{eqnarray} 
Similarly, there exists some value $r_P>0$ such that
\begin{eqnarray}
P^C_{in}\setminus P_H=\{p\mid p\in P\setminus P_H, \min_{1\leq j\leq k}||p-c_j||\leq r_P\}. \label{for-c1-2}
\end{eqnarray} 
Note $S\setminus P_H=P\setminus P_H$. So, if $r_S\leq r_P$, we have $S^C_{in}\setminus P_H\subseteq P^C_{in}\setminus P_H$. Otherwise, $P^C_{in}\setminus P_H\subseteq S^C_{in}\setminus P_H$.

\section{Proof of Claim 2}
Let $h=(h_1, \cdots, h_d)$, and suppose $h'=(h'_1, \cdots, h'_d)$ is $h$'s nearest neighbor in $\mathcal{G}$, {\em i.e.,} $|h'_d-h_d|\leq \frac{\epsilon}{2d}L$ and $|Dh'_j+h'_d-D h_j-h_d|\leq \frac{\epsilon}{2d}L$ for $1\leq j\leq d-1$. Then, 
\begin{eqnarray}
 |h'_j-h_j|&\leq& \frac{1}{D}(\frac{\epsilon}{2d}L+|h'_d-h_d|)\nonumber\\
 &\leq&\frac{\epsilon}{Dd}L
 \end{eqnarray}
for $1\leq j\leq d-1$. For any $p=(x_1, \cdots, x_d)\in\mathcal{R}_D$, 
\begin{eqnarray}
&&|Res(p, h)-Res(p, h')|\nonumber\\
&\leq& \sum^{d-1}_{j=1}|h'_j-h_j|\cdot |x_j|+|h'_d-h_d|\nonumber\\
&\leq& \sum^{d-1}_{j=1}|h'_j-h_j|\cdot D+|h'_d-h_d|\nonumber\\
&\leq& \epsilon L.
\end{eqnarray}

\end{document}